\documentclass[12pt]{article}  

\textwidth=17cm
\evensidemargin=0cm
\oddsidemargin=0cm
\topmargin=-1.3cm
\textheight=22.5cm

\usepackage{graphicx}
\usepackage{amsmath}
\usepackage{amsfonts}
\usepackage{amssymb}
\usepackage{amsthm}
\usepackage{color}
\usepackage{centernot}
\usepackage{mathtools}
\usepackage{stmaryrd}

\newtheorem{theorem}{Theorem}
\newtheorem{proposition}{Proposition}
\newtheorem{corollary}{Corollary}
\newtheorem{lemma}{Lemma}

\begin{document}

\title{Spontaneously stochastic Arnold's cat}
\date{\today}
\author{Alexei A. Mailybaev\footnote{Instituto de Matem\'atica Pura e Aplicada -- IMPA, Rio de Janeiro, Brazil. E-mail: alexei@impa.br} \and Artem Raibekas\footnote{Instituto de Matem\'atica e Estat\'istica, UFF, Niter\'oi, Brazil. E-mail: artemr@id.uff.br}}

\maketitle

\begin{abstract}
We propose a simple model for the phenomenon of Eulerian spontaneous stochasticity in turbulence. This model is solved rigorously, proving that infinitesimal small-scale noise in otherwise a deterministic multi-scale system yields a large-scale stochastic process with Markovian properties. Our model shares intriguing properties with open problems of modern mathematical theory of turbulence, like non-uniqueness of the inviscid limit, existence of wild weak solutions and explosive effect of random perturbations. Thereby, it proposes rigorous, often counterintuitive answers to these questions. Besides its theoretical value, our model opens new ways for the experimental verification of spontaneous stochasticity, and suggests new applications beyond fluid dynamics.
\end{abstract}

\section{Introduction}
Scaling symmetries of space and time shape the modern theory of developed turbulence~\cite{frisch1999turbulence}, which assumes that equations of motion for a velocity field $\mathbf{u}(\mathbf{r},t)$ are invariant with respect to the scaling transformations
	\begin{equation}
	\label{eq1}
	t,\mathbf{r},\mathbf{u} \ \mapsto\  \lambda^{1-h} t,\lambda\mathbf{r},\lambda^h\mathbf{u}
	\end{equation}
for arbitrary $\lambda > 0$ and $h \in \mathbb{R}$. Notice that this property refers to a wide (so-called inertial) interval of scales, at which both the forcing and viscous terms are negligible. Multi-scale systems of this kind may possess a fascinating property of \textit{spontaneous stochasticity}: a small-scale initial uncertainty develops into a randomly chosen large-scale state in a finite time, and this behavior is not sensitive to the nature and magnitude of uncertainty~\cite{lorenz1969predictability,leith1972predictability,ruelle1979microscopic,eyink1996turbulence,falkovich2001particles,boffetta2001predictability,palmer2014real,thalabard2020butterfly}. 

A simpler form of this phenomenon is the \textit{Lagrangian spontaneous stochasticity} (LSS) of particle trajectories in a turbulent (non-differentiable) velocity field, also known as the Richardson super-diffusion~\cite{frisch1999turbulence,falkovich2001particles}: two particles diverge to distant random states in finite time independently of their initial separation. Another intriguing form is the \textit{Eulerian spontaneous stochasticity} (ESS) of the velocity field itself: an infinitesimal small-scale noise triggers stochastic evolution of velocity field at finite scales and times. The consequences are both theoretical, revising the role of stochasticity in multi-scale classical systems, and practical, e.g. its implications for weather prediction~\cite{palmer2019stochastic,palmer2000predicting}. 
The ESS suggests a potentially new path for understanding the inviscid limit in the developed (Navier--Stokes) turbulence, which copes with a number of paradoxes like the recently discovered wild and non-unique dissipative weak solutions; see e.g.~\cite{buckmaster2021convex,de2021weak}.  Unlike the LSS, which can been studied in various models~\cite{bernard1998slow,eijnden2000generalized,kupiainen2003nondeterministic,eyink2013flux,drivas2017lagrangian,drivas2020statistical,eyink2020renormalization}, the current knowledge on the ESS is mostly limited to numerical simulations~\cite{palmer2014real,fjordholm2016computation,mailybaev2016spontaneously,biferale2018rayleigh,mailybaev2017toward,thalabard2020butterfly}. A rigorous theory of ESS remains elusive due to its sophisticated (infinite-dimensional) character. 

In this paper, we propose an artificial model, which is constructed as an infinite-dimensional extension of the (hyperbolic) Arnold's cat map~\cite{arnold1968ergodic} and yields a rigorously solvable example of ESS. This model is a formally deterministic system with a scaling symmetry, which possesses non-unique (uncountably many) solutions, including analogues of wild solutions known for the Euler equations of incompressible ideal fluid~\cite{buckmaster2021convex}. However, solutions are made unique by introducing a viscous-like regularization. By mimicking the Navier--Stokes turbulence~\cite{bardos2013mathematics}, we study the inviscid limit and we prove that it exists for subsequences, but yields uncountably many limiting solutions depending on a chosen subsequence. Then, we prove that adding a random perturbation as a part of the regularization yields a unique inviscid limit in the stochastic sense, i.e., it yields a unique and universal probability measure solving the original formally deterministic system with deterministic initial conditions. This probability measure defines a stochastic process with Markovian properties, and its universality means that it does not depend on a specific form of a random perturbation. The counterintuitive property of this spontaneously stochastic solution is that it assigns equal probability (uniform probability density) to all non-unique solutions. The rigorous answers produced by our model shed light on new ways of understanding the problem of non-uniqueness in the developing mathematical theory of turbulence~\cite{de2021weak}.


The paper has the following structure. Section~\ref{sec2} introduces the model and describes basic properties of non-unique solutions. Section~\ref{sec3} 
defines regularized solutions and studies non-unique inviscid (subsequence) limits. Section~\ref{sec4} introduces random regularization and formulates our main result on the existence and uniqueness of a spontaneously stochastic solution, which is proved in Section~\ref{sec5}. Section~\ref{sec6} investigates the convergence issues and presents results of numerical simulations. Further applications of obtained results are discussed in Section~\ref{sec7}.

\section{Model}\label{sec2}
We consider variables $u_n(t)$ depending on time $t$ and integer indices $n \in \mathbb{Z}^+ = \{0,1,2,\ldots\}$. One can see these variables as describing a multi-scale system with a geometric sequence of spatial scales $\ell_n = \lambda^{-n}$ for some $\lambda > 0$. In this case, the discrete analogue of scaling symmetry (\ref{eq1}) with $h = 0$ becomes
	\begin{equation}
	\label{eq2}
	t,u_n \ \mapsto \ \lambda t,u_{n+1},
	\end{equation}
where the index shift $n \mapsto n+1$ reflects the spatial scaling relation $\ell_n = \lambda\ell_{n+1}$. Notice that (\ref{eq2}) is the symmetry of the Euler equations for incompressible ideal fluid, in which case the variable $u_n$ can be introduced by low/high-pass filters or wavelet transforms of the velocity field in the range of scales between $\ell_n$ and $\ell_{n+1}$~\cite{frisch1999turbulence}. 

We construct an artificial model with symmetry (\ref{eq2}) by setting $\lambda = 2$ and defining variables $u_n(t)$ on the two-dimensional torus $\mathbb{T}^2 = \mathbb{R}^2/\mathbb{Z}^2$ at discrete times  
	\begin{equation}
	\label{eq3}
	t \in \tau_n \mathbb{Z}^+ 
	= \{0,\tau_n,2\tau_n,\ldots \},
	\quad
	\tau_n = 2^{-n},
	\end{equation}
where $\tau_n$ is interpreted as the ``turn-over'' time at scale $\ell_n$. As shown in Fig.~\ref{fig1}, all scales and corresponding times define the self-similar lattice 
	\begin{equation}
	\label{eq3b}
	\mathcal{L} = \{(n,t): n \in \mathbb{Z}^+,\ t \in \tau_n \mathbb{Z}^+\}.
	\end{equation}
Our model is defined by the deterministic relation
	\begin{equation}
	\label{eq4}
	u_n(t+\tau_n) = Au_n(t)+Au_{n+1}(t) \ \mathrm{mod}\ 1,
	\end{equation}
where the symmetric $2 \times 2$ matrix $A$ defines the Arnold's cat map~\cite{arnold1968ergodic}
	\begin{equation}
	\label{eq6}
	A: (x,y) \mapsto (2x+y,x+y) \ \mathrm{mod}\ 1,\quad (x,y) \in \mathbb{T}^2.
	\end{equation}
Relation (\ref{eq4}) defines evolution at scale $\ell_n$ over a single turn-over time $\tau_n$. Here we limited the inter-scale couplings to the same and smaller scales, $\ell_n$ and $\ell_{n+1}$, and took advantage that the map $A$ is a linear, hyperbolic, invertible and area-preserving. These properties greatly facilitate analysis of the model, and we discuss further generalizations later. Relation (\ref{eq4}) is 
invariant with respect to the scaling symmetry (\ref{eq2}). The resulting structure of whole system is presented schematically in Fig.~\ref{fig1}. 

\begin{figure}[t]
\centering
\includegraphics[width=0.6\textwidth]{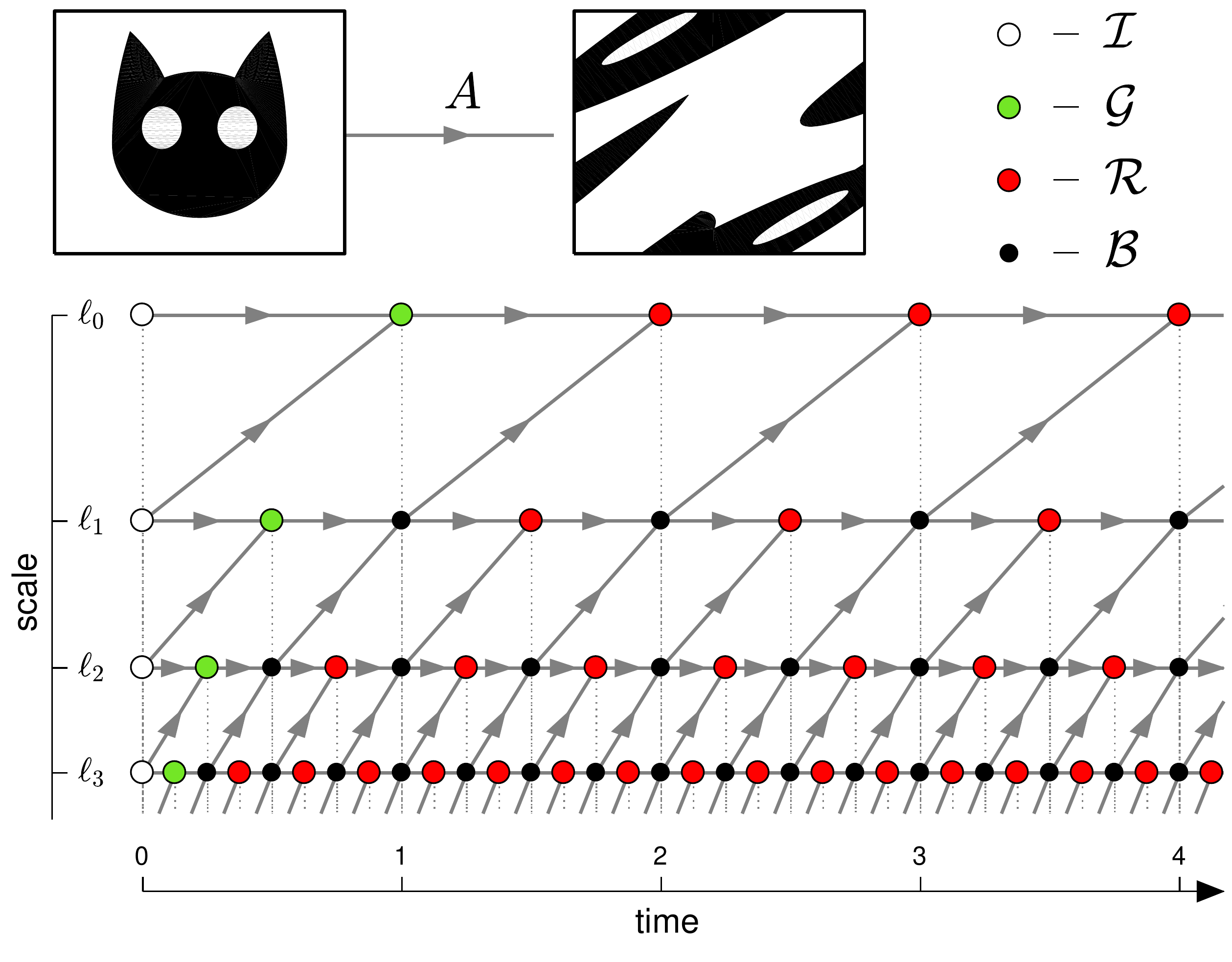}
\caption{Structure of the multi-scale map for the variables $u_n(t)$ corresponding to scales $\ell_n$ and discrete times $t \in \tau_n \mathbb{Z}^+$. Gray arrows represent the Arnold's cat map (shown on the top of the figure), which appear in the coupling relation (\ref{eq4}) and correspond to one turn-over time $\tau_n$. White circles correspond to initial conditions. Green ($\mathcal{G}$), red ($\mathcal{R}$) and small black ($\mathcal{B}$) circles denote, respectively, the next-time variables, the variables taking arbitrary values in Proposition~\ref{theorem1}, and the remaining variables.}
\label{fig1}
\end{figure}

We assume an arbitrary deterministic initial condition 
	\begin{equation}
	\label{eq7}
	u_n(0) = u_n^0,\quad n \in \mathbb{Z}^+.
	\end{equation}
We say that the infinite sequence $\left(u_n(t)\right)_{(n,t) \in \mathcal{L}}$ is a solution of the initial value problem, if it satisfies relations (\ref{eq4})--(\ref{eq7}) for all $(n,t) \in \mathcal{L}$. For describing all solutions, we split the lattice, $\mathcal{L} = \mathcal{I} \cup \mathcal{G} \cup \mathcal{R} \cup \mathcal{B}$, as shown in Fig.~\ref{fig1}. Here $\mathcal{I} = \left\{(n,0): n \in \mathbb{Z}^+\right\}$ are indices of initial conditions and $\mathcal{G} = \left\{(n,\tau_n): n \in \mathbb{Z}^+\right\}$ of the next-time variables. The remaining sets of indices are defined as
	\begin{align}
	\label{eq8b}
	\mathcal{B} = & \{(n+1,(2j+2)\tau_{n+1}): n,j \in \mathbb{Z}^+ \}, \\[3pt]
	\mathcal{R} = & \{(0,j+2): j \in \mathbb{Z}^+\} 
	\cup \{(n+1,(2j+3)\tau_{n+1}): n,j \in \mathbb{Z}^+\}.
	\label{eq8c}
	\end{align} 

\begin{proposition}
\label{theorem1}
For any given initial condition (\ref{eq7}), there is uncountable number of solutions of system (\ref{eq4}). Each solution is determined by initial conditions $u_n(t) \in \mathbb{T}^2$ for $(n,t) \in \mathcal{I}$ and arbitrary values $u_n(t) \in \mathbb{T}^2$ for $(n,t) \in \mathcal{R}$, in which case the remaining variables with $(n,t) \in \mathcal{G} \cup \mathcal{B}$ are defined uniquely.
\end{proposition}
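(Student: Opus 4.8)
The plan is to realise the stated correspondence concretely, by solving the system one scale at a time and exploiting the invertibility of $A$. The starting point is that relation (\ref{eq4}) can be read in two equivalent ways. Forward, it reads $u_n(t+\tau_n)=A[u_n(t)+u_{n+1}(t)]\ \mathrm{mod}\ 1$, determining the next-time variable at a fixed scale. Inverting $A$ (which is invertible, as noted after (\ref{eq6})) gives the second form
\begin{equation*}
u_{n+1}(t)=A^{-1}u_n(t+\tau_n)-u_n(t)\ \mathrm{mod}\ 1,
\end{equation*}
which recovers the finer-scale variable $u_{n+1}(t)$ from the two coarser-scale variables $u_n(t)$ and $u_n(t+\tau_n)$. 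The whole construction rests on using this inverted form to pass from scale $n$ to the finer scale $n+1$.

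Next I would dispose of the next-time set $\mathcal{G}$: applying (\ref{eq4}) at $t=0$ yields $u_n(\tau_n)=A[u_n^0+u_{n+1}^0]\ \mathrm{mod}\ 1$ for every $n$, so all $\mathcal{G}$-variables are fixed by the initial data (\ref{eq7}) alone. I then fix the $\mathcal{R}$-variables to be arbitrary points of $\mathbb{T}^2$. The core claim, proved by induction on the scale index $n$, is that every variable at scale $n$ is thereby uniquely specified. The base case $n=0$ is immediate: the scale-$0$ times are $t=0,1,2,\dots$, and the corresponding variables are the initial condition ($\mathcal{I}$), the single value $u_0(1)\in\mathcal{G}$ just found, and the free values $u_0(t)\in\mathcal{R}$ for $t\ge 2$; crucially, scale $0$ contains no $\mathcal{B}$-variables, so it is known from the outset. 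For the inductive step, assuming scale $n$ is fully determined, I would apply the inverted relation at each $(n,s)\in\mathcal{L}$ with $s=j\tau_n$, $j\ge 1$, to set $u_{n+1}(s)=A^{-1}u_n(s+\tau_n)-u_n(s)\ \mathrm{mod}\ 1$; both right-hand variables live at scale $n$ and are known by hypothesis. As $s$ ranges over these values, $u_{n+1}(s)$ ranges exactly over the scale-$(n+1)$ variables of even time-index $2j\ge 2$, which are precisely the $\mathcal{B}$-variables of (\ref{eq8b}). The remaining scale-$(n+1)$ variables have odd index: index $1$ is the already-fixed $\mathcal{G}$-value, and indices $\ge 3$ are the free $\mathcal{R}$-values of (\ref{eq8c}). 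Hence scale $n+1$ is completely specified, closing the induction.

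It then remains to verify consistency, i.e.\ that \emph{every} instance of (\ref{eq4}) holds and that the $\mathcal{R}$-variables are genuinely unconstrained. The assignment above uses each equation, indexed by $(n,s)\in\mathcal{L}$, to define exactly one variable of $\mathcal{G}\cup\mathcal{B}$: the instance at $s=0$ defines $u_n(\tau_n)\in\mathcal{G}$, and the instance at $s=j\tau_n$, $j\ge 1$, defines $u_{n+1}(s)\in\mathcal{B}$. This pairing is a bijection between the equations and $\mathcal{G}\cup\mathcal{B}$, so every equation holds by construction and none is left over as a constraint. Finally I would trace where an $\mathcal{R}$-variable $u_n(t)$ can appear: it enters only as a \emph{known input} in the equations defining neighbouring $\mathcal{B}$-variables, and the one instance of (\ref{eq4}) that might constrain it from the coarser scale, namely the instance at $(n-1,t)$, does not exist, since the odd time-index of an $\mathcal{R}$-variable at scale $n\ge 1$ forces $t\notin\tau_{n-1}\mathbb{Z}^+$ (and scale $0$ has no coarser scale). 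Thus the $\mathcal{R}$-variables are free, and uncountability is immediate: with the initial data fixed, letting even a single $\mathcal{R}$-variable range over $\mathbb{T}^2$ already produces a continuum of distinct solutions.

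The step I expect to be the main obstacle is not any one computation but getting the \emph{order of resolution} right. A naive forward-in-time integration fails, because computing $u_n$ at the next time via (\ref{eq4}) requires $u_{n+1}$ at the current time, which in turn requires $u_{n+2}$, and so on up the scales, never closing. The idea that unlocks the proof is to integrate \emph{across scales} rather than in time, using the invertibility of $A$ to move from scale $n$ to scale $n+1$, together with the observation that scale $0$ is fully known \emph{a priori} precisely because its would-be dynamically-generated future values are instead the free parameters in $\mathcal{R}$. The delicate bookkeeping is then to confirm that this scale-by-scale scheme meets every equation of (\ref{eq4}) exactly once, so that the $\mathcal{R}$-variables are neither over-determined nor left to impose hidden constraints.
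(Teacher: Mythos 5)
Your proposal is correct and follows essentially the same route as the paper: rewriting (\ref{eq4}) in the inverted form $u_{n+1}(t)=A^{-1}u_n(t+\tau_n)-u_n(t)$ and resolving the lattice scale by scale, with the $\mathcal{R}$-variables left free. You simply make explicit the induction and the equation-to-variable bijection that the paper compresses into ``inspecting Fig.~\ref{fig1}, one can verify\ldots''.
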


\begin{proof}
Let us write equation (\ref{eq4}) as
	\begin{equation}
	\label{eq9}
	u_{n+1}(t) = A^{-1}u_n(t+\tau_n)-u_n(t)  \ \mathrm{mod}\ 1.
	\end{equation} 
Then, given arbitrary $N \in \mathbb{Z}^+$ and inspecting Fig.~\ref{fig1}, one can verify that all variables $u_n(t)$ with $n \le N$ and $(n,t) \in \mathcal{G}\cup\mathcal{B}$ are uniquely defined by the initial conditions at $(n,t) \in \mathcal{I}$ and the  variables with $n < N$ and $(n,t) \in \mathcal{R}$.
Hence, all equations (\ref{eq4}) and initial conditions (\ref{eq7}) are satisfied for arbitrary $u_n(t) \in \mathbb{T}^2$ at $(n,t) \in \mathcal{R}$ and uniquely defined variables at $(n,t) \in \mathcal{G}\cup\mathcal{B}$.
\end{proof}

We notice that solutions of Proposition~\ref{theorem1} include analogues of the so-called wild weak solutions for Euler equations in fluid dynamics~\cite{buckmaster2021convex}. These are unphysical solutions with a finite support in time, i.e., nonzero for $t \in (t_1,t_2)$ but vanishing both for $t \le t_1$ and $t \ge t_2$. Such solutions are constructed in our model by choosing the variables with $(n,t) \in \mathcal{R}$ to be zero for times $t \notin (t_1+1,t_2)$ and nonzero for $t \in (t_1+1,t_2)$, where $t_1$ and $t_2$ are arbitrary positive integers. One can show using Proposition~\ref{theorem1} and relation (\ref{eq9}) that this yields uncountably many wild solutions; see Fig.~\ref{fig_wild}.

\begin{figure}[t]
\centering
\includegraphics[width=0.6\textwidth]{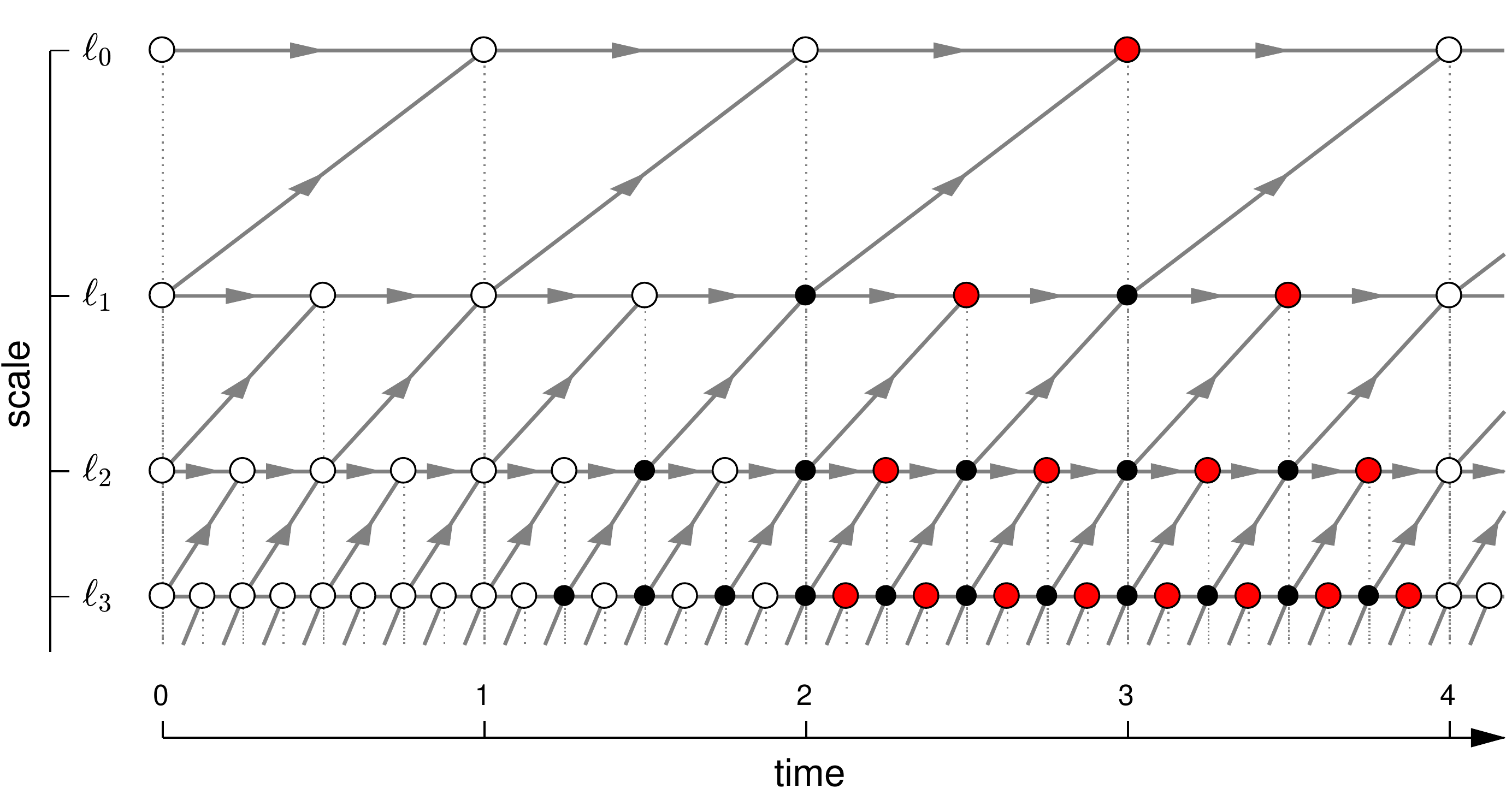}
\caption{Examples of wild solutions with the compact support in time: all variables vanish for $t \le 1$ and $t \ge 4$. Here all white circles correspond to zero variables and red circles denote arbitrary nonzero variables, which uniquely define the variables denoted by small black circles.}
\label{fig_wild}
\end{figure}

\section{Regularized solutions} \label{sec3}
Introducing a regularized system is a conventional way for dealing with non-uniqueness.
For any integer $N \ge 1$, we define the $N$-regularized system for the finite number of variables $u_1(t),\ldots,u_N(t)$ by setting the remaining variables to zero: $u_n(t) \equiv (0,0)$ for $n > N$. Thus, the set of variables reduces to $(u_n(t))_{(n,t) \in \mathcal{L}_N}$, where $\mathcal{L}_N = \left\{(n,t): n \in \{0,\ldots,N\},\ t \in \tau_n \mathbb{Z}^+\right\}$ is the truncated lattice. Equations of the $N$-regularized system are given by (\ref{eq4}) for $n < N$ with the equation for $n = N$ reduced to the form $u_n(t+\tau_n) = Au_n(t)$. The initial conditions are defined by relations (\ref{eq7}) limited to the scales $n \le N$. This truncation resembles the viscous regularization in fluid dynamics, where the viscous term of the Navier--Stokes equations suppresses the turbulent motion below a certain (so-called Kolmogorov) microscale $\eta \sim \ell_N$~\cite{frisch1999turbulence}. 

One can easily see from Fig.~\ref{fig1} that $N$-regularized solutions, which we denote by $u_n^{(N)}(t)$, are uniquely defined by the initial conditions. One can always choose a subsequence $N_1 < N_2 < \cdots$ such that the $N$-regularized solutions converge for all $n$ and $t$ (see~\cite[Theorem 3.10.35]{engelking1989general}):
	\begin{equation}
	\label{eq10}
	u_n(t) = \lim_{i \to \infty} u_n^{(N_i)}(t),
	\end{equation}
where $u_n(t)$ is some solution from Proposition~\ref{theorem1}. For example, for vanishing initial conditions, this limit yields the vanishing solution at all times $t \ge 0$, therefore, ruling out all wild solutions mentioned above.

Similarly to turbulence models~\cite{mailybaev2016spontaneous}, solutions obtained in the regularization limit (\ref{eq10}) are non-unique in general, because different subsequences yield different solutions: 

\begin{proposition}
\label{prop1}
Consider the initial condition (\ref{eq7}) with all variables equal to the same value $u_n^0 = a$. Then, for almost every choice of $a \in \mathbb{T}^2$, there exist infinite (uncountable) number of different solutions obtained as subsequence limits (\ref{eq10}) of  $N$-regularized systems.
\end{proposition}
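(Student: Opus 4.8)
The plan is to reduce everything to the behaviour of a single large-scale coordinate, the value $v_N:=u_0^{(N)}(2)$ of the $N$-regularized solution at the free node $(0,2)\in\mathcal{R}$, and to show that for almost every $a$ the sequence $(v_N)_{N\ge1}$ accumulates at uncountably many points. Writing $u_n[m]:=u_n(m\tau_n)$, relation~(\ref{eq4}) becomes $u_n[m+1]=Au_n[m]+Au_{n+1}[2m]$, and with the common datum one finds first $u_n[1]=2Aa$ for every scale (which is why $t=1$ is $N$-independent and $t=2$ is the first interesting time), and then $u_n[2]=2A^2a+Au_{n+1}[2]$ for $n<N$, with the autonomous top scale giving $u_N[2]=A^2a$. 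Iterating this downward is routine and yields $v_N=M_N\,a\bmod 1$ with the explicit integer matrix
\[ M_N = 2\sum_{i=2}^{N+1} A^i + A^{N+2}. \]
The only care needed in this step is the boundary term coming from the cut scale $n=N$.

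Next I would diagonalize. Lifting $a$ to $\mathbb{R}^2$ and writing $a=a_+e_++a_-e_-$ in the eigenbasis of $A$, with eigenvalues $\mu_+=\varphi^2>1>\mu_-=\varphi^{-2}>0$ (where $\varphi=(1+\sqrt5)/2$) and unstable eigenvector $e_+=(\varphi,1)$ of irrational slope, the stable component of $M_Na$ converges as $N\to\infty$ because $|\mu_-|<1$, so it contributes only a fixed translation to the accumulation set mod $1$. The unstable component equals $a_+P_Ne_+$ with $P_N=2\sum_{i=2}^{N+1}\mu_+^i+\mu_+^{N+2}=C\mu_+^{N+2}-D$, where $C=(\mu_++1)/(\mu_+-1)\neq0$ and $D$ are constants. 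Hence, up to a fixed translation and a vanishing perturbation, $v_N$ has the same accumulation set as $c\,\mu_+^{N}e_+\bmod 1$ with $c=a_+C\mu_+^2$. I note in passing that this is exactly the forward orbit $A^{N}(c\,e_+)$ of the cat map~(\ref{eq6}) along the unstable manifold of the origin, which is the intuitive source of its complexity; the excluded case $a_+=0$ is a null line.

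The crux is then to show that $c\mu_+^Ne_+\bmod 1$ is dense --- in fact equidistributed --- in $\mathbb{T}^2$ for Lebesgue-almost every $c$. By Weyl's criterion this amounts to $\tfrac1M\sum_{N\le M}e^{2\pi i(p\varphi+q)c\mu_+^N}\to0$ for every $(p,q)\in\mathbb{Z}^2\setminus\{0\}$; since $p\varphi+q\neq0$ (by irrationality of $\varphi$) and $(\mu_+^N)_N$ is Hadamard lacunary with ratio $\mu_+>1$, the metric equidistribution theorem of Koksma gives, for each fixed nonzero frequency, convergence to $0$ for almost every $c$, and intersecting the countably many null exceptional sets leaves a full-measure set of admissible $c$. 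Because $a\mapsto a_+\mapsto c$ is a nonzero linear functional followed by a nonzero scaling, its pushforward of Lebesgue measure is absolutely continuous, so the good set of $a$ has full measure. I expect this metric-equidistribution step, together with the careful bookkeeping of the ``almost every'' quantifier across the two coordinates, to be the main obstacle; the hyperbolicity of $A$ and the golden-ratio slope of $e_+$ are precisely what make it go through.

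Finally I would transfer density of one coordinate to uncountability of limiting solutions. For any $w\in\mathbb{T}^2$, choose $N_i$ with $v_{N_i}\to w$; by compactness of $(\mathbb{T}^2)^{\mathcal{L}}$ (Tychonoff) extract a subsequence along which the whole family $u^{(N)}$ converges to a solution $u^{(w)}$ of~(\ref{eq4}) --- a genuine solution by the argument around~(\ref{eq10}), since for each fixed $(n,t)$ the regularized equation eventually coincides with the full one --- and this limit satisfies $u_0^{(w)}(2)=w$. Distinct values of $w$ give distinct solutions, and since $w$ ranges over all of $\mathbb{T}^2$, there are uncountably many such subsequence limits, which proves the claim.
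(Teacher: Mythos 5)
Your argument is correct, and it reaches the same explicit formula as the paper (your $M_N=2\sum_{i=2}^{N+1}A^i+A^{N+2}$ is exactly the paper's $(A^2-B)+BA^N$ with $B=A^2(I+A)(A-I)^{-1}$), but it diverges at the key density step. The paper keeps the matrix form and invokes ergodicity of the cat map: for a.e.\ $a$ the orbit $A^Na$ is dense in $\mathbb{T}^2$, so one prescribes an arbitrary target $c$, solves $(A^2-B)a+Bb=c$ for $b$, and picks the subsequence with $A^{N_i}a\to b$. You instead diagonalize, discard the convergent stable component as a fixed translation, and prove that the unstable contribution $c\,\mu_+^N e_+\bmod 1$ is equidistributed for a.e.\ $c$ via Weyl's criterion plus Koksma's metric theorem for the lacunary sequence $((p\varphi+q)\mu_+^N)_N$, using irrationality of $\varphi$ to keep the frequencies nonzero; the absolute continuity of $a\mapsto a_+$ then transfers "a.e.\ $c$" to "a.e.\ $a$". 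Both routes are sound (your small slip that $u_N[1]=Aa$ rather than $2Aa$ at the cut scale is flagged and does not propagate, since your final $M_N$ is right), and the final Tychonoff subsequence extraction is identical to the paper's. What each buys: the paper's ergodicity argument is shorter and explains why the generalization in the Corollary needs only that $A$ have no root-of-unity eigenvalues for this proposition; your spectral/Weyl argument proves slightly more (equidistribution, not just density, of the accumulation points) and is structurally the same mechanism the paper deploys later in the proof of Theorem~\ref{theorem2} — indeed the hypotheses (i)–(ii) of the Corollary (simple dominant eigenvalue, rationally independent eigenvector components) are precisely what your argument would require in the general $m\times m$ case, where it would be strictly more demanding than the paper's.
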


\begin{proof}
Let us focus on the specific variable $u_0^{(N)}(2)$. By induction with relation (\ref{eq4}) represented by gray arrows in Fig.~\ref{fig1}, one can verify the formula
	\begin{equation}
	\label{eq11}
	u_0^{(N)}(2) = A^2 u_0^0 + (A+A^2)\sum_{n = 1}^{N} A^nu_n^0 \ \ \mathrm{mod}\ 1.
	\end{equation}
Taking into account that all initial values are equal to $a$, and $\sum_{n = 1}^N A^n = (A-I)^{-1}(A^{N+1}-A)$ with the identity map $I$, one reduces (\ref{eq11}) to the form
	\begin{equation}
	\label{eq12}
	u_0^{(N)}(2) = (A^2-B)a+BA^N a  \ \ \mathrm{mod}\ 1,
	\end{equation}
where $B = A^2(I+A)(A-I)^{-1}$ is a nonsingular matrix with integer components.

The ergodicity of the Arnold's Cat map implies that the sequence $A^Na$ with $N \in \mathbb{Z}^+$ is dense on the torus for almost every $a \in \mathbb{T}^2$. Let us consider such $a$, an arbitrary $c\in \mathbb{T}^2$ and define $b \in \mathbb{T}^2$ such that $(A^2-B)a+Bb=c$. Since $A^Na$ is a dense orbit, we can choose an infinite subsequence $N_i$ such that $A^{N_i}a \to b$ as $N_i \to \infty$. Then, expression (\ref{eq12}) yields 
	\begin{equation}
	\label{eq12bb}
	\lim_{i \to \infty} u_0^{(N_i)}(2)=c.
	\end{equation}
Similarly to (\ref{eq10}) we can take a subsequence $N_{i_k}$ within the sequence $N_i$, so that $u_n^{(N_{i_k})}(t)$ converges to a solution $u_n(t)$ for all $n$ and $t$. In particular, this implies that $u_0(2)=c$ for arbitrary $c\in \mathbb{T}^2$, providing uncountable number of limits for the regularized system.
\end{proof}

Proposition~\ref{prop1} shows that the regularization does not serve as a proper selection criterion among infinitely many solutions given by Proposition~\ref{theorem1}. As we show in the next section, there is a deep reason for this failure of the regularization strategy. Contrary to the common intuition, all solutions of Proposition~\ref{theorem1} become equally relevant when the stochastic form of regularization is considered.

\section{Spontaneously stochastic solution}\label{sec4}
Let us modify the definition of $N$-regularized solution by adding a random small-scale perturbation. For simplicity, we consider a single random number $\xi \in \mathbb{T}^2$ added to the initial value at the cuf-off scale $n = N$ as
	\begin{equation}
	\label{eq13}
	u_N^{(N)}(0) = u_N^0+\xi,
	\end{equation}
with $\xi$ having a Lebesgue integrable probability density $\rho(\xi)$. This formulation is not only technically convenient, but also highlights an exceptional role of even a single source of randomness at small scales. Generalization to multiple random sources is rather straightforward. 

Let us consider the mapping 
	\begin{equation}
	\label{eq13IC}
	\left(u_0^{0},\ldots,u_N^{0}\right) \mapsto 
	\left(u_n^{(N)}(t)\right)_{(n,t) \in \mathcal{L}}
	\end{equation}
relating deterministic initial conditions with deterministic $N$-regularized solutions; recall that $u_n^{(N)}(t) \equiv 0$ for $n > N$. For the new random initial condition (\ref{eq13}), we introduce the full vector of initial states as
	\begin{equation}
	\label{eq13ICa}
	(\zeta_0,\ldots,\zeta_N)  
	= \left(u_0^{0},\ldots,u_{N-1}^{0},u_N^{0}+\xi\right) \in \mathbb{T}^{2(N+1)},
	\end{equation}
and define the corresponding probability measure as
	\begin{equation}
	\label{eq13ICc}
	d\mu_{\mathrm{ini}}^{(N)} = \left(\prod_{n = 0}^{N-1} \delta(\zeta_n-u_N^{0})d\zeta_n\right) 
	\rho(\zeta_N-u_N^{0})d\zeta_N.
	\end{equation}
This measure is a product of Dirac delta functions on the torus $\mathbb{T}^2$ for the first $N$ components and the shifted density $\rho(\xi)$ for the last component. We denote by $\mu^{(N)}$ the corresponding probability measure of $N$-regularized solutions, which is naturally obtained as the image (push-forward) of $\mu_{\mathrm{ini}}^{(N)}$ by the mapping (\ref{eq13IC}).

Let us consider the standard product topology on the lattice $\mathcal{L}$ and Borel probability measures endowed with the weak-convergence topology; see e.g. \cite{tao2011introduction}. We say that the original problem (\ref{eq4})--(\ref{eq7}) has a \textit{spontaneously stochastic solution} described by a non-trivial measure $\mu$, if it is obtained as the limit
	\begin{equation}
	\label{eq_sslim}
	\mu = \lim_{N \to \infty} \mu^{(N)},
	\end{equation}
in which the regularization is removed. Now we can formulate our main result as

\begin{theorem}
\label{theorem2}
Problem (\ref{eq4})--(\ref{eq7}) has a spontaneously stochastic solution given by the probability measure $\mu$ specified below by Eqs.~(\ref{eq15a})--(\ref{eq14}). This measure is universal, i.e., independent of the small-scale perturbation $\xi$.
\end{theorem}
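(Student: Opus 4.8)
The plan is to establish the weak limit $\mu^{(N)}\to\mu$ by testing against characters and then localizing all the randomness into the ``free'' variables of Proposition~\ref{theorem1}. Since the state space $\prod_{(n,t)\in\mathcal{L}}\mathbb{T}^2$ is compact and metrizable in the product topology, weak convergence is equivalent to $\int\chi\,d\mu^{(N)}\to\int\chi\,d\mu$ for every trigonometric character $\chi(u)=\exp\!\big(2\pi i\sum_{(n,t)\in S}m_{n,t}\cdot u_n(t)\big)$ with finite $S\subset\mathcal{L}$ and $m_{n,t}\in\mathbb{Z}^2$, these being dense in the continuous functions by Stone--Weierstrass. For $\mu$ I would take the measure specified by Eqs.~(\ref{eq15a})--(\ref{eq14}): it fixes the variables $(n,0)\in\mathcal{I}$ at $u_n^0$, makes the variables $(n,t)\in\mathcal{R}$ independent and uniform on $\mathbb{T}^2$, and defines the remaining $(n,t)\in\mathcal{G}\cup\mathcal{B}$ through the deterministic local relation~(\ref{eq9}).

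The first reduction uses Proposition~\ref{theorem1} together with the locality of~(\ref{eq4}): the backward cone needed to compute a determined variable $u_n(t)$, $(n,t)\in\mathcal{G}\cup\mathcal{B}$, is a finite set of points, so $u_n(t)$ is a fixed continuous (mod-$1$ linear) function of finitely many variables in $\mathcal{I}\cup\mathcal{R}$, and once $N$ exceeds the scales in that cone this relation is identical in the $N$-regularized system and in the limit. As toral endomorphisms are continuous, the continuous mapping theorem reduces Theorem~\ref{theorem2} to a single claim about the free variables: for any finite family of distinct $r_1,\dots,r_k\in\mathcal{R}$, the joint law of $\big(u^{(N)}_{r_1},\dots,u^{(N)}_{r_k}\big)$ under $\mu^{(N)}$ tends to the uniform law on $(\mathbb{T}^2)^k$. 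The $\mathcal{I}$-variables already match $\mu$ for every $N$ large enough that $S$ avoids scale $N$.

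Next I would exploit that the regularized solution is affine in the single random input $\xi$: writing $u^{(N)}_n(t)=c^{(N)}_n(t)+M^{(N)}_n(t)\,\xi \bmod 1$, the integer coefficient matrix solves $M^{(N)}_n(t+\tau_n)=A\,M^{(N)}_n(t)+A\,M^{(N)}_{n+1}(t)$ for $n<N$, with $M^{(N)}_N(0)=I$, $M^{(N)}_n(0)=0$ for $n<N$, and $M^{(N)}_{N+1}\equiv0$. The character of the free variables then factors as $\exp\!\big(2\pi i\sum_j m_j\cdot c^{(N)}_{r_j}\big)\,\widehat{\rho}\big(K^{(N)}\big)$, where $K^{(N)}=\sum_j\big(M^{(N)}_{r_j}\big)^{\!\top}m_j\in\mathbb{Z}^2$ and $\widehat{\rho}$ is the Fourier coefficient of $\rho$. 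Because $\rho\in L^1(\mathbb{T}^2)$, the Riemann--Lebesgue lemma forces this to vanish as soon as $|K^{(N)}|\to\infty$, regardless of the phase $\sum_j m_j\cdot c^{(N)}_{r_j}$, which need not converge (cf.\ Proposition~\ref{prop1}). This is exactly the mechanism of spontaneous stochasticity, and it simultaneously yields universality: the limit uses $\rho$ only through $\widehat{\rho}(K^{(N)})\to0$, valid for every integrable density.

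The hard part is the growth-and-separation estimate $|K^{(N)}|\to\infty$ when the $m_j$ are not all zero. Solving the recursion, I expect $M^{(N)}_n(t)$ to grow along the expanding eigendirection of $A$ (eigenvalue $\Lambda=(3+\sqrt5)/2>1$) like $\Lambda^{\,2^N\theta(n,t)}$, up to subleading matrix factors, with $\theta(n,t)=t-2\tau_n$ measuring the amplification time after $\xi$ climbs from scale $N$ to scale $n$. The decisive combinatorial fact is that $\theta$ is injective on $\mathcal{R}$: the points $(0,j+2)$ give $\theta=j\in\mathbb{Z}^+$, while $(n+1,(2j+3)\tau_{n+1})$ give $\theta=(2j+1)\,2^{-(n+1)}$, so $\theta$ is a bijection from $\mathcal{R}$ onto the nonnegative dyadic rationals. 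Hence distinct free points have exponents separated by gaps of order $2^{N-L}\to\infty$, with $L$ the largest scale in $S$, so the term of largest exponent dominates $K^{(N)}$; its coefficient does not vanish because $A^{\top}$ is hyperbolic with irrational eigendirections, whence $(A^{p})^{\top}m$ has a nonzero expanding component for every integer $m\neq0$. The two points to handle carefully are that the $O(N)$-size subleading corrections to the exponents cannot overturn the $2^{N-L}$ separation, and that no cancellation occurs among terms sharing the top exponent, which is precisely excluded by the injectivity of $\theta$. Granting this, the free variables equidistribute, the continuous mapping theorem propagates the limit to all of $\mathcal{L}$, the limit $\mu$ is non-degenerate and independent of $\rho$, and Theorem~\ref{theorem2} follows.
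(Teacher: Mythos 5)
Your proposal is correct and follows essentially the same route as the paper: reduction of weak convergence to characters on the free variables indexed by $\mathcal{R}$, the affine dependence $u_n^{(N)}(t)=P_{n,t}^{(N)}(A)\xi+a_{n,t}^{(N)}$ (your $M_n^{(N)}(t)$ is the paper's path polynomial $P_{n,t}^{(N)}(A)$), the Riemann--Lebesgue lemma applied to $\widehat\rho$ at the integer frequency $K^{(N)}$, and domination along the expanding eigendirection with non-vanishing guaranteed by the irrationality of that direction. The growth-and-separation estimate you "grant" is precisely the paper's Lemma~\ref{pol_conv}, and your observation that $\theta(n,t)=t-2\tau_n$ is injective on $\mathcal{R}$ is an equivalent, arguably cleaner, formulation of the paper's statement that distinct points of $\mathcal{R}$ lie on distinct diagonal lines with infinitely separated growth rates.
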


We postpone the proof for the next section, and now describe the measure $\mu$. This measure is composed as a product of four pieces. The first two are the probability measures $\mu_{\mathcal{I}}$ and $\mu_{\mathcal{G}}$ corresponding, respectively, to deterministic initial conditions (\ref{eq7}) and the next-time variables $u_n(\tau_n)$ given uniquely by relation (\ref{eq4}):
	\begin{align}
	\label{eq15a}
	d\mu_{\mathcal{I}} = & \prod_{n \in \mathbb{Z}^+} \delta\left(u_n(0)-u_n^0\right)du_n(0),
	\\[5pt]
	\label{eq15b}
	d\mu_{\mathcal{G}} = &\ \prod_{n \in \mathbb{Z}^+} 
	\delta\left(u_n(\tau_n)-Au_n^0-Au_{n+1}^0\right)\, du_n(\tau_n).
	\end{align}
Here $du_n(t)$ defines the Lebesgue (uniform) probability measure on $\mathbb{T}^2$ corresponding to a specific variable $u_n(t)$. The third piece is given by the measure 
	\begin{equation}
	\label{eq15c}
	d\mu_{\mathcal{R}} = \prod_{(n,t) \in \mathcal{R}} du_n(t),
	\end{equation}
which describes a random uniform choice of variables $u_n(t)$ from the red set $\mathcal{R}$; see Fig.~\ref{fig1}. The last piece ensures that all relations (\ref{eq4}) are satisfied for $t > \tau_n$. These relations are verified at points of the black set $\mathcal{B}$ using Eq.~(\ref{eq4}) transformed to form $u_n(t) = A^{-1}u_{n-1}(t+\tau_{n-1})-u_{n-1}(t)$; see Fig.~\ref{fig1}. Therefore, we define
	\begin{equation}
	\label{eq15d}
	d\mu_{\mathcal{B}} = \prod_{(n,t) \in \mathcal{B}}  
	\delta\left( u_n(t)+u_{n-1}(t)
	-\,A^{-1}u_{n-1}(t+\tau_{n-1}) \right) du_n(t).
	\end{equation}
The probability measure $\mu$ is given by the product
	\begin{equation}
	\label{eq14}
	d\mu = d\mu_{\mathcal{I}} \, d\mu_{\mathcal{G}} \, d\mu_{\mathcal{R}}\,  d\mu_{\mathcal{B}}.
	\end{equation} 

It is remarkable that the spontaneously stochastic solution $\mu$ assigns equal probability (uniform distribution) to all solutions of Proposition~\ref{theorem1} independently of the random perturbation $\xi$. One can see, however, that the probability measure corresponding to a set of wild solutions discussed above is zero.

Let us consider evolution of the spontaneously stochastic solution $\mu$ by focusing on integer times. At each $t \in \mathbb{Z}^+$, the solution defines a probability measure $\mu_t$ on the infinite-dimensional space of variables $\mathbf{u}(t) = \left(u_0(t),u_1(t),u_2(t),\ldots\right)$. For example, projecting the measure (\ref{eq15a})--(\ref{eq14}) at $t = 1$, we have 
	\begin{equation}
	\label{eq18b}
	d\mu_1 = \delta\left(u_0(1)-Au_0^0-Au_1^0\right)\, \prod_{n \in \mathbb{Z}^+}du_n(1),
	\end{equation}
where $u_0(1)$ is deterministic and $u_n(1)$ with $n \ge 1$ are random (independent and uniformly distributed). 
Measure (\ref{eq18b}) defines a Markov kernel: given a specific initial state $\mathbf{u}(0)$ it yields the probability distribution for $\mathbf{u}(1)$. Hence, the dynamics of our model at integer times represents a Markov process. In our example, $\mu_t$ converges at $t \ge 2$ to the equilibrium state $\mu_t \equiv \mu_{\mathrm{eq}}$, which is the uniform (Haar) measure on $\mathbb{T}^\infty$. As discussed in a different example later on, the convergence of $\mu_t$ as $t \to \infty$ does not always occur in a finite time.

By inspecting the proofs of Propositions~\ref{theorem1} and \ref{prop1} and of Theorem~\ref{theorem2} one can generalize our results as follows.

\begin{corollary} Let us consider a larger class of models given by relation (\ref{eq4}), where $A$ is an arbitrary $m \times m$ matrix with integer elements and $\det A =1$,
thus defining 
an automorphism of the $m$-dimensional torus $\mathbb{T}^m$. Proposition 1 remains valid with no additional hypothesis. Proposition \ref{prop1} is valid if we assume that $A$ does not possess eigenvalues which are roots of unity, in which case the induced automorphism of $\mathbb{T}^m$ is ergodic~\cite{Mane}. Finally, Theorem \ref{theorem2} remains valid under the two additional assumptions:
\begin{itemize}
\item[(i)] The dominant (maximum absolute value) eigenvalue $\lambda$ of $A$ is simple and greater than $1$.
\item[(ii)] Let $v = (v_1,\ldots,v_m)$ be the eigenvector corresponding to $\lambda$ for the transposed matrix $A^T$. Then the numbers $v_1,\ldots,v_m$ and $1$ are rationally independent. 
\end{itemize}
\end{corollary}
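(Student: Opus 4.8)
The plan is to prove the three assertions separately, in each case returning to the original argument and isolating the single place where the explicit $2\times 2$ Arnold matrix was used, then replacing it by the stated hypotheses. The structural fact used throughout is that $\det A=1$ together with integrality forces $A\in SL_m(\mathbb{Z})$, so that $A^{-1}=\operatorname{adj}(A)$ is again an integer matrix; hence both $A$ and $A^{-1}$ descend to automorphisms of $\mathbb{T}^m$, which is all that the combinatorics of the lattice $\mathcal{L}$ ever requires. For Proposition~\ref{theorem1} this is already enough: its proof invokes (\ref{eq9}), i.e. $A^{-1}$, only to run the elimination expressing the variables on $\mathcal{G}\cup\mathcal{B}$ in terms of those on $\mathcal{I}\cup\mathcal{R}$, and the bookkeeping of Fig.~\ref{fig1} is purely combinatorial, independent of $m$ and of the entries of $A$. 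The argument therefore carries over verbatim with no extra hypothesis, as claimed.

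For Proposition~\ref{prop1}, formula (\ref{eq11}) is obtained by iterating (\ref{eq4}) and is a product of integer matrices, so it holds for any integer $A$. Its reduction to (\ref{eq12}) needs the geometric-sum identity, hence invertibility of $A-I$, and needs $B=A^2(I+A)(A-I)^{-1}$ to be nonsingular, hence invertibility of $I+A$; both hold once no eigenvalue of $A$ is a root of unity, since that excludes the eigenvalues $1$ and $-1$. The same hypothesis makes the induced automorphism ergodic (Mañé~\cite{Mane}), so $\{A^N a\}$ is dense for almost every $a$, and solving $(A^2-B)a+Bb=c$ and approximating $b$ along the dense orbit reproduces every $c\in\mathbb{T}^m$ as a subsequence limit of $u_0^{(N_i)}(2)$, giving uncountably many limiting solutions. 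One subtlety deserves flagging: unlike the symmetric cat map, $B$ need not be integral for general $A$, so the passage to the limit in (\ref{eq12}) is no longer mere continuity of a torus endomorphism. It must instead be supplied by the equidistribution of $\{BA^N a \bmod 1\}$ for a.e.\ $a$, which a standard lacunary-equidistribution argument provides, built on the expanding eigenvalue that Kronecker's theorem forces to exist once no eigenvalue is a root of unity.

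The substance of the corollary is the generalization of Theorem~\ref{theorem2}, and this is where I expect the main obstacle to lie. Since every coordinate lives on the compact torus, the product space is compact metrizable and the family $\{\mu^{(N)}\}$ is automatically tight, so weak convergence $\mu^{(N)}\to\mu$ reduces to convergence of all finite-dimensional marginals. Fixing finitely many sites, on the deterministic sites $\mathcal{I}\cup\mathcal{G}$ and on the relation-enforcing sites $\mathcal{B}$ the $N$-regularized solution already satisfies the constraints encoded by the delta-factors of $\mu$, so the only thing to prove is that the marginal on any finite collection of red sites $(n_j,t_j)\in\mathcal{R}$ converges to a product of uniform (Haar) measures. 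Writing each such variable as an affine function of the single perturbation, $u_{n_j}^{(N)}(t_j)=c_j^{(N)}+C_j^{(N)}\xi \bmod 1$ with integer matrices $C_j^{(N)}$ built from the forward dynamics, this is exactly the statement that the push-forward of $\rho(\xi)\,d\xi$ under $\xi\mapsto(C_1^{(N)}\xi,\dots,C_J^{(N)}\xi)$ tends to Haar measure on $(\mathbb{T}^m)^J$. By Weyl's criterion it suffices that for every nonzero integer tuple $(k_1,\dots,k_J)$ the frequency $\eta_N=\sum_j (C_j^{(N)})^T k_j$ satisfies $|\eta_N|\to\infty$, for then $\widehat{\rho}(\eta_N)\to 0$ by Riemann--Lebesgue (using only $\rho\in L^1$), and universality in $\xi$ is automatic.

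This divergence of frequencies is the crux, and it is exactly what hypotheses (i) and (ii) deliver. The transported frequency is governed by the iterates $(A^T)^N$; by (i) the dominant eigenvalue $\lambda>1$ is simple, so $(A^T)^N$ stretches like $\lambda^N$ along the single direction spanned by $v$, and the leading part of $\eta_N$ is $\lambda^N$ times the projection of an integer covector combination onto that eigendirection. Condition (ii) is precisely the Diophantine (Weyl) condition ensuring that no nonzero integer covector is asymptotically annihilated, ruling out the cancellations that a resonant or reducible spectrum would permit, while distinct scales and times contribute distinct powers of $\lambda$ that prevent cancellation between summands. Granting $|\eta_N|\to\infty$, the red marginals converge to independent uniform laws, and identifying these with $\mu_{\mathcal{R}}$ and the surviving delta-constraints with $\mu_{\mathcal{I}}, \mu_{\mathcal{G}}, \mu_{\mathcal{B}}$ yields the explicit product form (\ref{eq15a})--(\ref{eq14}). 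The hard part, and the place where (i)--(ii) are genuinely needed, is establishing this joint, multi-scale non-annihilation of frequencies simultaneously for all finite families of red variables.
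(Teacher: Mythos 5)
Your proposal is correct and follows essentially the same route the paper intends: the corollary is justified there only by ``inspecting the proofs,'' and you carry out that inspection accurately, locating exactly where $\det A=1$, the no-roots-of-unity hypothesis, and conditions (i)--(ii) enter (the last two via pairing $A_{\mathbf{k}}^{(N)}$ with the left eigenvector $v$ so that the Riemann--Lebesgue argument and the $A$-independent combinatorial Lemma~\ref{pol_conv} go through unchanged). Your flag that $B=A^2(I+A)(A-I)^{-1}$ need not be integral for general $A$ is a genuine subtlety the paper glosses over, and your proposed repair is in the right spirit; a slightly more elementary fix is to note that $B$ is rational with bounded denominators, so $BA^N a \bmod 1$ splits into finitely many translates of $B$ applied to the dense orbit $A^Na \bmod 1$, which already yields uncountably many subsequential limits.
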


\section{Proof of Theorem~\ref{theorem2}}\label{sec5}
The weak convergence of measures $\mu^{(N)} \to \mu$ in the product topology~\cite{tao2011introduction} follows from the following property, which describes the convergence for all finite-dimensional projections.

\begin{lemma}
\label{theorem2b}
Let $\left(u_n(t)\right)_{(n,t) \in \mathcal{S}} \in \mathbb{T}^{2d}$ be any finite set of variables indexed by $\mathcal{S} = \{(n_i,t_i):i = 1,\ldots,d\} \subset \mathcal{L}$. Let $\mu_{\mathcal{S}}$ and $\mu_{\mathcal{S}}^{(N)}$ be the corresponding probability measures obtained by projecting the measure $\mu$ from (\ref{eq15a})--(\ref{eq14}) and the stochastically regularized measure $\mu^{(N)}$. Then, 
	\begin{equation}
	\label{eq18}
	\lim_{N \to \infty}\int \varphi \,d\mu^{(N)}_{\mathcal{S}} = \int \varphi \,d\mu_{\mathcal{S}}
	\end{equation}
for any continuous observable $\varphi:\mathbb{T}^{2d} \mapsto \mathbb{R}$.
\end{lemma}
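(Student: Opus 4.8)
The plan is to test the weak convergence \eqref{eq18} on a subalgebra dense in $C(\mathbb{T}^{2d})$. Since $\mathbb{T}^{2d}$ is compact (so the families are automatically tight) and trigonometric polynomials are dense by Stone--Weierstrass, it suffices to verify \eqref{eq18} for the characters $\varphi(\cdot)=\exp\!\big(2\pi i\sum_{i=1}^d m_i^{\mathsf T}u_{n_i}(t_i)\big)$ with arbitrary frequencies $m_i\in\mathbb{Z}^2$. Equivalently, I would prove that the joint characteristic function of $\mu^{(N)}_{\mathcal S}$ converges to that of $\mu_{\mathcal S}$.

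The structural observation driving everything is that the $N$-regularized map \eqref{eq13IC} is affine with integer coefficients: since $A$ is an integer matrix with $\det A=1$ (so $A^{-1}$ is integer as well), every regularized variable can be written as
\[
u^{(N)}_{n_i}(t_i)=c_i(N)+L_i(N)\,\xi \ \ \mathrm{mod}\ 1,
\]
where $c_i(N)\in\mathbb{T}^2$ is a deterministic constant fixed by the initial data and $L_i(N)$ is an integer matrix, namely a Laurent polynomial in $A$ whose leading power I track below. Inserting this into the character and using that under $\mu^{(N)}$ the only randomness is $\xi\sim\rho$, the characteristic function factorizes as
\[
\int \varphi\,d\mu^{(N)}_{\mathcal S}
=\exp\!\Big(2\pi i\sum_i m_i^{\mathsf T}c_i(N)\Big)\,
\widehat{\rho}\big(k_N\big),\qquad
k_N=\sum_{i=1}^d L_i(N)^{\mathsf T}m_i\in\mathbb{Z}^2,
\]
with $\widehat\rho$ the Fourier transform of $\rho$. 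The target $\int\varphi\,d\mu_{\mathcal S}$ has the analogous form in which the free variables are the deterministic data on $\mathcal I$ and the independent uniform variables on $\mathcal R$; integrating the latter yields a phase times the indicator that the frequency coupled to each $\mathcal R$-variable vanishes. The delta factors $d\mu_{\mathcal G}$ and $d\mu_{\mathcal B}$ require no separate treatment, because the exact relations \eqref{eq4} and \eqref{eq15d} hold identically in the regularized system and are therefore already built into the coefficients $c_i(N),L_i(N)$.

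The argument then reduces to a dichotomy for $\widehat\rho(k_N)$: by Riemann--Lebesgue, $\widehat\rho(k)\to0$ as $|k|\to\infty$ for $\rho\in L^1$, while $\widehat\rho(0)=1$, so the universality in $\rho$ comes out as a byproduct. The crux is to show that $|k_N|\to\infty$ precisely when the corresponding limiting $\mathcal R$-frequencies do not all cancel. For this I would compute the leading power of $A$ in $L_i(N)$ by unrolling \eqref{eq4} from the cut-off scale, where $u^{(N)}_N(k\tau_N)=A^{k}(u_N^0+\xi)$, up to scale $n_i$: the $\xi$-coefficient of $u^{(N)}_{n_i}(t_i)$ is dominated by $A^{q_i(N)}$ (up to lower powers) with $q_i(N)=\tilde t_i\,2^{N}+O(N)$ and effective time $\tilde t_i=t_i-2\tau_{n_i}$. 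A short check shows that distinct points of $\mathcal R$ have pairwise distinct effective times, so the exponents $q_i(N)$ have gaps diverging like $2^N$; hence $k_N$ is controlled by its single highest-power term $(A^{\mathsf T})^{q_i}m_i$. Since $A$ has a simple dominant eigenvalue $\lambda>1$ and, by the rational independence of the dominant eigendirection (assumption (ii) of the Corollary), no nonzero integer vector lies in the stable subspace of $A^{\mathsf T}$, this term grows like $\lambda^{q_i}$ and cannot be cancelled by the exponentially smaller remaining terms. Thus $|k_N|\to\infty$ whenever some $\mathcal R$-frequency is nonzero, and $k_N$ stays at zero coupling otherwise, matching exactly the indicator structure of $\mu_{\mathcal S}$.

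I expect this last step to be the main obstacle: controlling the integer vector $k_N=\sum_i L_i(N)^{\mathsf T}m_i$ and ruling out cancellation. The separation-of-scales estimate (distinct effective times $\Rightarrow$ diverging exponent gaps) must be made precise including the subleading $O(N)$ corrections, and one must exclude accidental cancellation inside a single $L_i(N)$ as well as in the deterministic phases. This is exactly where hyperbolicity (the simple dominant $\lambda>1$) and the Diophantine/irrationality property of the dominant eigenvector enter decisively.
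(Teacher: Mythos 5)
Your proposal follows essentially the same route as the paper's proof: test on characters, write each regularized variable as an integer matrix polynomial in $A$ applied to $\xi$ plus a deterministic shift, factor the characteristic function into a phase times $\widehat{\rho}(k_N)$, and drive $|k_N|\to\infty$ via Riemann--Lebesgue using the separation of leading path-length exponents between distinct points of $\mathcal{R}$ together with the fact that no nonzero integer frequency is annihilated by the projection onto the unstable eigendirection. The ``distinct effective times $\tilde t_i=t_i-2\tau_{n_i}$'' step you flag as the main obstacle is exactly the content of the paper's Lemma~\ref{pol_conv} (its diagonal lines are the level sets of $\tilde t$), and cancellation inside a single $L_i(N)$ is excluded there by the positivity of the coefficients of the path-counting polynomial $P^{(N)}_{n,t}$ evaluated at $\alpha>1$.
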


\begin{proof} 
First, we express explicitly an arbitrary variable $u_n(t)$ in terms of initial conditions $u_n^0$ and the random quantity $\xi$ of the stochastically $N$-regularized problem. 
For this purpose, we use the polynomials $P_{n,t}^{(N)}$ defined as
	\begin{equation}
	\label{eq19A}
	P_{n,t}^{(N)} (x) = \sum_{\textrm{all paths }p\textrm{ from }\atop (N,0)\textrm{ to }(n,t)} x^{|p|},
	\end{equation}
where the sum is taken over all paths following grey (right or up-right diagonal) arrows in Fig.~\ref{fig1}, which connect $(N,0)$ to $(n,t)$, and $|p|$ denotes the number of arrows in the path. 
Using iteratively the linear relation (\ref{eq4}) with the truncation property $u_n(t) = 0$ for $n > N$ and the initial conditions (\ref{eq7}) and  (\ref{eq13}), one can check that
	\begin{equation}
	\label{eq20}
	u_n^{(N)}(t) = P_{n,t}^{(N)}(A) \xi+a_{n,t}^{(N)}\ \mathrm{mod}\ 1, \quad
	a_{n,t}^{(N)} = \sum_{k= n}^N P_{n,t}^{(k)}(A) u_{k}^0,
	\end{equation}
where $a_{n,t}^{(N)} \in \mathbb{T}^2$ denotes the contribution from deterministic initial conditions. The probability measure $\mu_{\mathcal{S}}^{(N)}$ corresponding to a finite set of random variables $\left(u_n^{(N)}(t)\right)_{(n,t) \in \mathcal{S}} \in \mathbb{T}^{2d}$ is obtained using relation (\ref{eq20}) as
	\begin{equation}
	\label{eq20b}
	d\mu_{\mathcal{S}}^{(N)}	= \int\left( \prod_{(n,t) \in \mathcal{S}} 
	\delta\left(u_n^{(N)}(t)-P_{n,t}^{(N)}(A) \xi-a_{n,t}^{(N)}\right)du_n^{(N)}(t)\right)
	\rho(\xi)\,d\xi,
	\end{equation}
where $du_n^{(N)}(t)$ denotes the Lebesgue (uniform) probability measure on $\mathbb{T}^2$ corresponding to a specific variable $u_n^{(N)}(t)$, and $\rho: \mathbb{T}^2 \mapsto \mathbb{R}^+$ is a measurable probability density for the random number $\xi$.

Now let us analyse an arbitrary set $\mathcal{S}$. It is enough to consider $\mathcal{S} = \mathcal{L}_{n,t}$ in the rectangular region
	\begin{equation}
	\label{eq24L}
	\mathcal{L}_ {n,t}= \left\{(n',t'): n' \le n,\ t' \le t\right\}
	\end{equation}
for any integer $n$ and $t$. Both measures $\mu$ and $\mu^{(N)}$ are supported on the linear subspace determined by relations (\ref{eq4}). Specifically, according to Proposition~\ref{theorem1} and Fig.~\ref{fig1}, variables at white nodes correspond to initial conditions, variables at green nodes are determined by initial conditions only, and variables at black nodes are given by initial conditions and by variables at red nodes of the set $\mathcal{R}$. These relations do not depend on $N$. Hence, for both projected measures $\mu_\mathcal{S}$ and $\mu_\mathcal{S}^{(N)}$ with $\mathcal{S} = \mathcal{L}_ {n,t}$, relations (\ref{eq4}) define variables $u_{n'}(t')$ from $\mathcal{S}$ in terms of initial conditions and variables from $\mathcal{L}_ {n,t} \cap \mathcal{R}$. From this property, one can infer that the relation (\ref{eq18}) can be verified for smaller sets of the form
	\begin{equation}
	\label{eq24S}
	\mathcal{S} = \mathcal{L}_ {n,t} \cap \mathcal{R},
	\end{equation}
in which we ignored the remaining deterministic variables.

Projecting the measure $\mu$ from (\ref{eq15a})--(\ref{eq14}) on the subspace given by (\ref{eq24S}), one obtains that $\mu_\mathcal{S}$ is the Lebesgue measure in $\mathbb{T}^{2d}$. Then, the integral in the right-hand side of (\ref{eq18}) reduces to the mean value of the observable:
	\begin{equation}
	\label{eq18mod}
	\lim_{N \to \infty}\int \varphi \,d\mu^{(N)}_{\mathcal{S}} = \int \varphi(\mathbf{w}) \, d^{2d} \mathbf{w},
	\end{equation}
where $\mathbf{w} = \left(w_{n,t}\right)_{(n,t) \in \mathcal{S}} \in \mathbb{T}^{2d}$ denotes the vector of variables indexed by $\mathcal{S}$, with $w_{n,t} = u_n^{(N)}(t)$ in the integral on the left-hand side.
Let is consider the Fourier expansion 
	\begin{equation}
	\label{eq25F}
	\varphi(\mathbf{w}) = \sum_{\mathbf{k} \in (2\pi\mathbb{Z})^{2d}} 
	\varphi_{\mathbf{k}} \exp(i\mathbf{k} \cdot \mathbf{w}),
	\end{equation}
where we introduced the wavevector $\mathbf{k} = \left(k_{n,t}\right)_{(n,t) \in \mathcal{S}} \in (2\pi\mathbb{Z})^{2d}$; the dot denotes the scalar product.
Using (\ref{eq25F}) in relation (\ref{eq18mod}), the constant term $\varphi_{\mathbf{0}}$ compensates the integral in the right-hand side since 
$\varphi_{\mathbf{0}} = \int \varphi \, d^{2d} \mathbf{w}$. 
Therefore, it remains to show that
	\begin{equation}
	\label{eq25}
	\lim_{N \to \infty} \int \exp(i\mathbf{k}\cdot \mathbf{w}) \, d\mu^{(N)}_{\mathcal{S}} = 0
	\end{equation}
for any nonzero wavevector $\mathbf{k}$.  Using (\ref{eq20b}) with the property $k_{n,t} \in (2\pi\mathbb{Z})^2$ and symmetry of the matrix $A$, we have
	\begin{equation}
	\label{eq26}
	\int \exp(i\mathbf{k}\cdot \mathbf{w}) \, d\mu^{(N)}_{\mathcal{S}} 
	= \exp\left(i a_{\mathbf{k}}^{(N)}\right) 
	\int \exp\left(i A_{\mathbf{k}}^{(N)} \cdot \xi\right)\rho(\xi)d\xi,
	\end{equation}
where we introduced the scalar $a_{\mathbf{k}}^{(N)} \in \mathbb{R}$ and the vector $A_{\mathbf{k}}^{(N)} \in \mathbb{R}^2$ as
	\begin{equation}
	\label{eq27}
	a_{\mathbf{k}}^{(N)} = \sum_{(n,t) \in \mathcal{S}} k_{n,t} \cdot a_{n,t}^{(N)},\quad
	A_{\mathbf{k}}^{(N)} = \sum_{(n,t) \in \mathcal{S}} P_{n,t}^{(N)}(A)k_{n,t}.
	\end{equation}
Notice that the integral in the right-hand side of (\ref{eq26}) represents the Fourier coefficient of $\rho(\xi)$ of order $-A_{\mathbf{k}}^{(N)}$. By the Riemann--Lebesgue lemma the high-order Fourier coefficients of the function $\rho(\xi)$ converge to zero. Therefore, to conclude the proof it is enough to show that $\|A_{\mathbf{k}}^{(N)}\| \rightarrow \infty$ as $N\rightarrow\infty$ for any fixed nonzero wavevector $\mathbf{k} \in (2\pi\mathbb{Z})^{2d}$. 
	
Using the eigenvalue decomposition of the Arnold's cat map (\ref{eq6}), we can write~\cite{arnold1968ergodic} 	\begin{equation}
	\label{eq28}
	P_{n,t}^{(N)}(A) = P_{n,t}^{(N)}(\alpha) A_1+P_{n,t}^{(N)}(\alpha^{-1}) A_2,
	\end{equation}
where $\alpha = (3+\sqrt{5})/2$ and $\alpha^{-1}$ are eigenvalues of $A$, and the symmetric matrices $A_1$ and $A_2$ are given by the linear maps
	\begin{equation}
	\label{eq28A}
	\textstyle
	A_1:(x,y) \mapsto \left(\frac{\alpha x+(\alpha-1)y}{\alpha+1},\frac{(\alpha-1)x+y}{\alpha+1}\right), \quad
	A_2:(x,y) \mapsto \left(\frac{x+(1-\alpha)y}{\alpha+1},\frac{(1-\alpha)x+\alpha y}{\alpha+1}\right).
	\end{equation}
Substituting (\ref{eq28}) into the second expression of (\ref{eq27}) yields
	\begin{equation}
	\label{eq27_new}
	A_{\mathbf{k}}^{(N)} = \sum_{(n,t) \in \mathcal{S}} 
	\left[P_{n,t}^{(N)}(\alpha) A_1k_{n,t}
	+P_{n,t}^{(N)}(\alpha^{-1}) A_2k_{n,t}\right].
	\end{equation}
Since $P_{n,t}^{(N)}$ defined in (\ref{eq19A}) is a polynomial with positive coefficients and $\alpha > 1$, we have 
	\begin{equation}
	\label{eq29}
	\lim_{N \to \infty} \frac{P_{n,t}^{(N)}(\alpha)}{P_{n,t}^{(N)}(\alpha^{-1})} = \infty.
	\end{equation}
Using Lemma~\ref{pol_conv} formulated and proved below, we can order the elements in $\mathcal{S} = \{(n_i,t_i): i =1,\ldots,d\}$ such that 
	\begin{equation}
	\label{eq30}
	\lim_{N \to \infty} \frac{P_{n_{i+1},t_{i+1}}^{(N)}(\alpha)}{P_{n_i,t_i}^{(N)}(\alpha)} 
	= \infty, \quad i = 1,\ldots,d-1.
	\end{equation}
	
Notice that $A_1k_{n,t} = \frac{1}{\alpha+1}\left(\alpha \ \ \alpha-1 \atop \alpha-1 \ \ 1 \right) k_{n,t}$ following from (\ref{eq28A}), where $\alpha = (3+\sqrt{5})/2$ is the irrational number. Since the wavevector $k_{n,t}/(2\pi) \in \mathbb{Z}^2$ has integer components, $A_1k_{n,t}$ is nonzero if $k_{n,t}$ is nonzero. Therefore, using
properties (\ref{eq29}) and (\ref{eq30}) in expression (\ref{eq27_new}), one can see that the magnitude of $A_{\mathbf{k}}^{(N)}$ is dominated by the polynomial $P_{n_i,t_i}^{(N)}(\alpha)$ with the largest $i$ such that $k_{n_i,t_i}$ is nonzero. Since $P_{n_i,t_i}^{(N)}(\alpha) \to \infty$ as $N \to \infty$, we prove the desired property that $\|A_{\mathbf{k}}^{(N)}\| \rightarrow \infty$ as $N\rightarrow\infty$. 
\end{proof}

\begin{lemma} \label{pol_conv}
Elements $(n_j, t_j)$, $j = 1,\ldots,d$ of any finite subset $\mathcal{S} \subset \mathcal{R}$ can be ordered such that  (\ref{eq30}) holds.
\end{lemma}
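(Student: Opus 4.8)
The plan is to reduce the claim to the asymptotics in $N$ of a single arithmetic invariant of each node, read off from the maximal path length entering $P_{n,t}^{(N)}$. First I would unpack the path description in (\ref{eq19A}): a grey-arrow path from $(N,0)$ to $(n,t)$ is uniquely encoded by the numbers $h_m\ge 0$ of horizontal (same-scale) steps taken at each scale $m\in\{n,\dots,N\}$, since the scale index is non-increasing and drops by exactly one at each of the $N-n$ diagonal steps, so the step ordering is forced once the $h_m$ are fixed. A horizontal step at scale $m$ and the diagonal step arriving at scale $m$ both advance time by $\tau_m=2^{-m}$, so the endpoint constraint reads $\sum_{m=n}^{N}h_m 2^{-m}=t-2^{1-n}+2^{1-N}=:T$, and the length is $|p|=(N-n)+\sum_{m=n}^{N}h_m$. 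Because $\alpha>1$, the sum $P_{n,t}^{(N)}(\alpha)=\sum_p\alpha^{|p|}$ is governed to leading exponential order by the maximal length $L_{n,t}^{(N)}:=\max_p|p|$.

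Next I would evaluate $L_{n,t}^{(N)}$ explicitly. Rewriting the constraint as $\sum_{m=n}^{N}h_m 2^{N-m}=2^N T$ and using $2^{N-m}\ge 1$ shows that $\sum_m h_m\le 2^N T$, with equality exactly when all horizontal steps sit at the finest scale $m=N$; since $2^N T\in\mathbb{Z}^+$, this extremal path is realized. Hence
\[
L_{n,t}^{(N)}=c_{n,t}\,2^{N}+N-n+2,\qquad c_{n,t}:=t-2^{1-n}.
\]
The coefficient $c_{n,t}$ of the dominant term $2^N$ is the quantity I would use to order $\mathcal{S}$.

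The key structural fact is that $c_{n,t}$ is \emph{injective} on $\mathcal{R}$. For the nodes $(0,j+2)$ one has $c_{0,j+2}=j\in\mathbb{Z}^+$, while for $(n,(2j+3)\tau_n)$ with $n\ge 1$ one has $c_{n,t}=(2j+1)2^{-n}$, a non-integer dyadic rational whose reduced denominator is exactly $2^n$. These values are pairwise distinct and disjoint from the integers, so I would simply order the finite set $\mathcal{S}\subset\mathcal{R}$ so that $c_{n_1,t_1}<\cdots<c_{n_d,t_d}$; this is possible precisely because $\mathcal{R}$ admits no two nodes with equal $c$, which is why the lemma is restricted to $\mathcal{R}$.

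It then remains to show that a strict gap in $c$ forces (\ref{eq30}), and here lies the main obstacle: controlling the prefactor (the number of near-maximal paths) so that it cannot overturn the leading order. Writing $P_{n,t}^{(N)}(\alpha)=\alpha^{L_{n,t}^{(N)}}Q_{n,t}^{(N)}$ with $1\le Q_{n,t}^{(N)}\le\#\{\text{paths}\}$, I would bound the number of paths by the number of tuples $(h_n,\dots,h_N)$ solving the constraint; even the crude estimate $\#\{\text{paths}\}\le(2^N T+1)^{N-n+1}$ yields $\log\#\{\text{paths}\}=O(N^2)=o(2^N)$. Consequently $\log P_{n,t}^{(N)}(\alpha)=c_{n,t}\,2^N\log\alpha+o(2^N)$, so for consecutive elements with $c_{n_{i+1},t_{i+1}}>c_{n_i,t_i}$ the logarithm of the ratio equals $(c_{n_{i+1},t_{i+1}}-c_{n_i,t_i})\,2^N\log\alpha+o(2^N)\to+\infty$, which is exactly (\ref{eq30}). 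The entire argument hinges on the injectivity of $c_{n,t}$ on $\mathcal{R}$ together with the $o(2^N)$ control of the prefactor, the latter being the only delicate point.
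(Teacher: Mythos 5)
Your proof is correct, but it takes a genuinely different route from the paper's. The paper argues structurally: a path-surgery construction shows $P_{n+1,t}^{(N)}(x)/P_{n,t}^{(N)}(x)\to\infty$, the recursion $P_{n,t}^{(N)}(x)=xP_{n,t-\tau_n}^{(N)}(x)+xP_{n+1,t-\tau_n}^{(N)}(x)$ then yields finite limiting ratios $x^{n'-n}$ along descending diagonal lines, and the ordering of $\mathcal{S}\subset\mathcal{R}$ is read off from the left-to-right ordering of these lines (each of which meets $\mathcal{R}$ in exactly one point). You instead compute the exact maximal path length $L_{n,t}^{(N)}=c_{n,t}2^N+N-n+2$ with $c_{n,t}=t-2^{1-n}$, check that $c$ is injective on $\mathcal{R}$, bound the number of paths by $\exp(O(N^2))=\exp(o(2^N))$, and order by increasing $c$. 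The two are closely related — your invariant $c_{n,t}$ is constant along the paper's diagonal lines (it is the limiting time of the line as the scale index tends to infinity), so your injectivity claim is exactly the statement that each line carries one point of $\mathcal{R}$ — but your argument is more quantitative: it delivers the precise double-exponential growth rate $\log P_{n,t}^{(N)}(\alpha)\sim c_{n,t}2^N\log\alpha$ (consistent with the paper's Section~\ref{sec6} estimate (\ref{eq32})) and makes transparent why the restriction to $\mathcal{R}$ is needed, at the cost of an explicit combinatorial computation; the paper's surgery-plus-recursion argument avoids any counting of near-maximal paths and generalizes more readily to other coupling graphs. All the steps you flag as delicate check out: $2^NT$ is a nonnegative integer for $(n,t)\in\mathcal{R}$ and large $N$, so the extremal path exists, and the prefactor bound $(2^NT+1)^{N-n+1}$ is indeed $o$ of the leading exponential.
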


\begin{figure}
\centering
\includegraphics[width=0.65\textwidth]{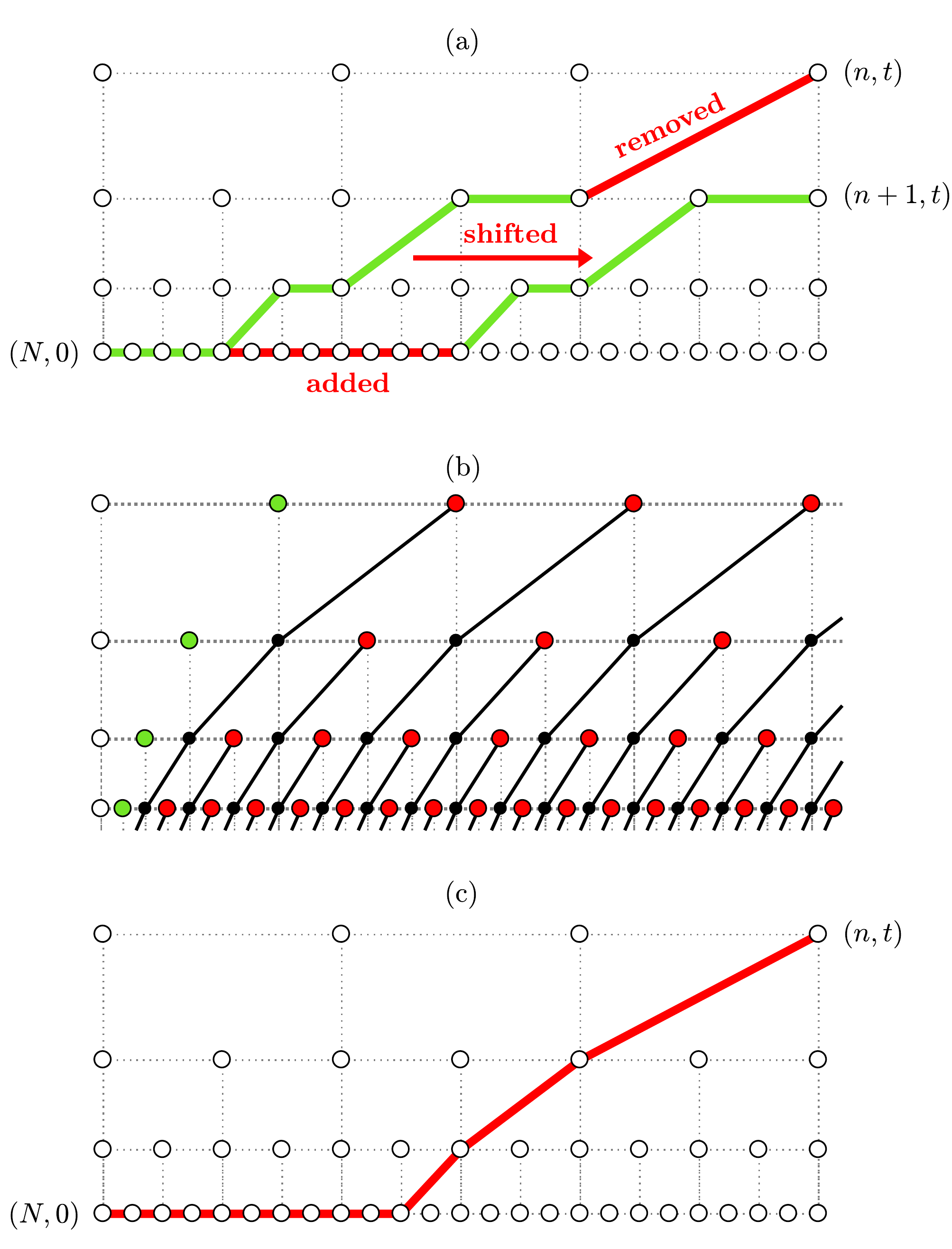}
\caption{(a) Every path connecting $(N,0)$ to $(n,t)$ defines the path connecting $(N,0)$ to $(n+1,t)$ through the following surgery procedure. The upper (red) part of the path is removed and the remaining (green) part is shifted to the left. Then, the lower (red) part is added to complete the new path. (b) Black lines connect nodes $(n',t')$ related by Eq.~(\ref{eq23b}), where $(n,t)$ are taken at red points. Polynomials on the same line have finite (nonzero and non-infinite) ratios in the limit $N \to \infty$. (c) The path connecting $(N,0)$ to $(n,t)$ with the largest number of segments.}
\label{figA}
\end{figure}

\begin{proof}
Observe that the condition $(n,t) \in \mathcal{R}$ with $n \le N$ ensures that $P_{n,t}^{(N)} (x)$ from (\ref{eq19A}) is nonzero for any $x > 0$; see Fig.~\ref{fig1}. For any path $p$ from $(N,0)$ to $(n,t)$ in (\ref{eq19A}), one constructs a new path $p'$ from $(N,0)$ to $(n+1,t)$ as shown in Fig.~\ref{figA}(a): removing the final segments at scale $n$, shifting the remaining part to the right, and adding extra segments at scale $N$. In this procedure, each removed segment yields the $\tau_n/\tau_N = 2^{N-n}$ added segments. This means that
	\begin{equation}
	\label{eq21}
	\lim_{N \to \infty} \frac{P_{n+1,t}^{(N)}(x)}{P_{n,t}^{(N)}(x)} = \infty,
	\end{equation}
where we assumed an arbitrarily chosen number $x > 1$. 
Notice that the definition (\ref{eq19A}) implies
	\begin{equation}
	\label{eq22}
	P_{n,t}^{(N)} (x) = xP_{n,t-\tau_n}^{(N)} (x)+xP_{n+1,t-\tau_n}^{(N)} (x),
	\end{equation}
where the last two terms correspond to the paths ending, respectively, with the horizontal and diagonal arrows (Fig.~\ref{fig1}).
Using (\ref{eq21}) in (\ref{eq22}), we have
	\begin{equation}
	\label{eq23}
	\lim_{N \to \infty} \frac{P_{n,t}^{(N)}(x)}{P_{n+1,t-\tau_n}^{(N)}(x)} = x.
	\end{equation}
Iterating this relation yields
	\begin{equation}
	\label{eq23b}
	\lim_{N \to \infty} \frac{P_{n,t}^{(N)}(x)}{P_{n',t'}^{(N)}(x)} = x^{n'-n}
	\quad \textrm{for}\quad n' > n,\quad t' = t-\sum_{j = n}^{n'-1}\tau_j.
	\end{equation}
When $(n,t) \in \mathcal{R}$, the points $(n',t')$ from (\ref{eq23b}) belong to a descending diagonal line as shown in Fig.~\ref{figA}(b). Inspecting these diagonal lines and using the property (\ref{eq21}), one can deduce that			
	\begin{equation}
	\label{eq24extr}
	\lim_{N \to \infty} \frac{P_{n_2,t_2}^{(N)}(x)}{P_{n_1,t_1}^{(N)}(x)} = \infty
	\end{equation}
for any distinct elements $(n_1,t_1)$ and $(n_2,t_2)$ of the set $\mathcal{R}$. Here the indices are chosen such that the black line starting at $(n_2,t_2)$ is located to the right of the line starting at $(n_1,t_1)$; see Fig.~\ref{figA}(b). In particular, this implies that any finite subset of elements $(n_j,t_j) \in \mathcal{R}$ can be ordered satisfying the properties (\ref{eq30}).
\end{proof}

\section{Convergence rate}\label{sec6}
We now address practical aspects of convergence: how small can be the random perturbation $\xi$ and how large must be the number of scales $N$ for observing the spontaneously stochastic solution with a given variable $u_n(t)$? 
	
Relations (\ref{eq20}) and (\ref{eq28}) in the proof of Theorem~\ref{theorem2} with the limit (\ref{eq29}) indicate that the convergence to the spontaneously stochastic limit for each variable $u_n^{(N)}(t)$ is controlled by the factor 
	\begin{equation}
	\label{eq19}
	P_{n,t}^{(N)} (\alpha) = \sum_{\textrm{all paths }p\textrm{ from }\atop (N,0)\textrm{ to }(n,t)} \alpha^{|p|}.
	\end{equation}
Here $\alpha = \frac{1}{2}(3+\sqrt{5}) \approx 2.618$ and the sum is taken over all paths following grey (right or up-right diagonal) arrows in Fig.~\ref{fig1}, which connect $(N,0)$ to $(n,t)$; $|p|$ denotes the number of arrows the path. The factor (\ref{eq19}) amplifies the random perturbation induced by $\xi$ in the variable $u_n^{(N)}(t)$. Let us assume that $\xi$ takes small random values of order $\varepsilon$ and has a sufficiently regular probability density (e.g. Holder continuous). Hence, for observing spontaneous stochasticity at node $(n,t)$, the corresponding error must become large: $P_{n,t}^{(N)}(\alpha) \varepsilon \gg 1$. This yields the condition
	\begin{equation}
	\label{eq31}
	P_{n,t}^{(N)}(\alpha) \gg 1/\varepsilon.
	\end{equation}

\begin{figure}[t]
\centering
\includegraphics[width=0.55\textwidth]{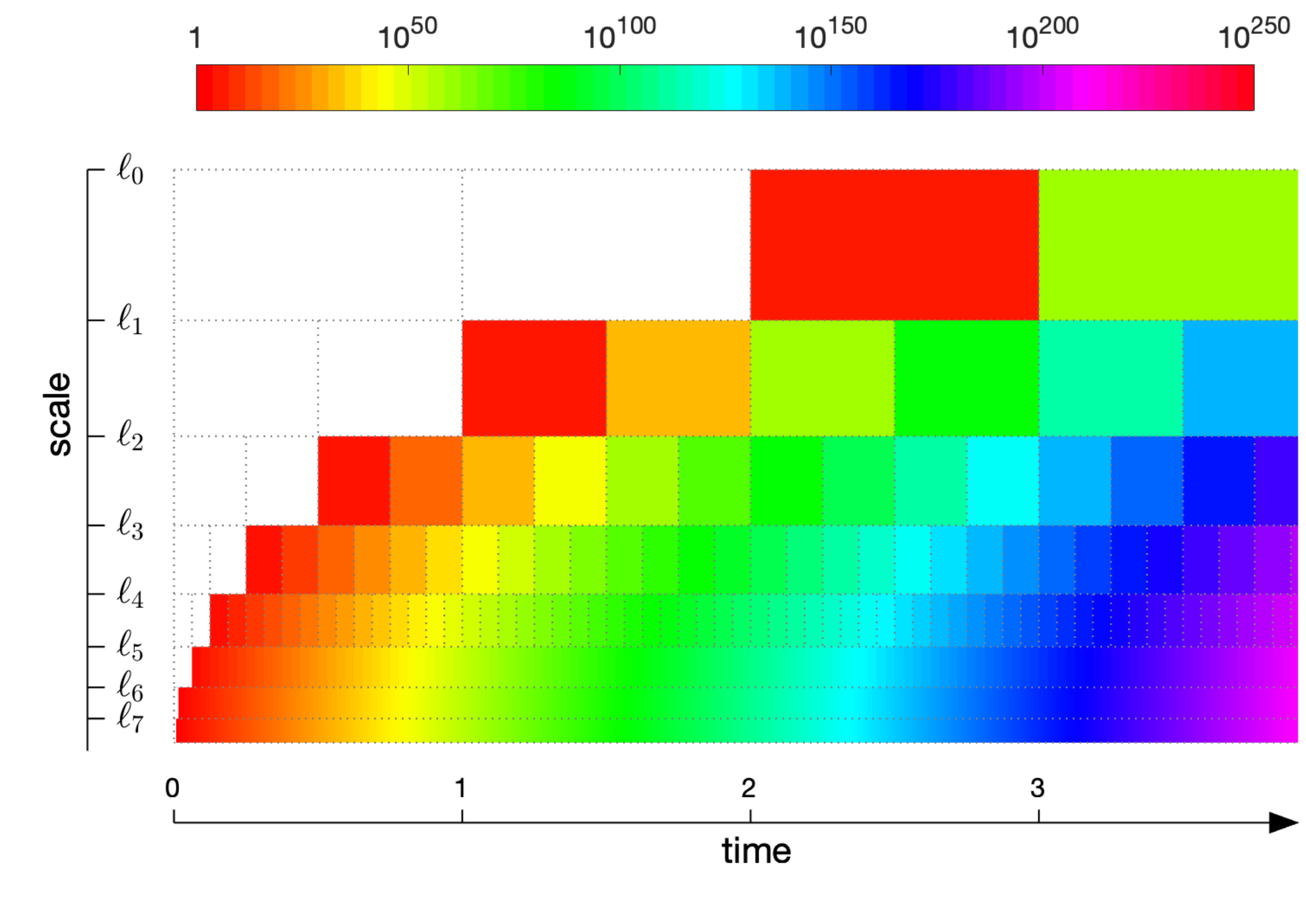}
\caption{Amplification factor $P_{n,t}^{(N)}(\alpha)$ of the initial error evaluated at each point of the lattice in a system with $N = 7$ scales. The color of each rectangle shows (in logarithmic scale) the value of $P_{n,t}^{(N)}(\alpha)$ corresponding to the node $(n,t)$ located in the upper left corner of the rectangle; zero values are shown by white color.}
\label{fig2}
\end{figure}

\begin{figure}
\centering
\includegraphics[width=0.8\textwidth]{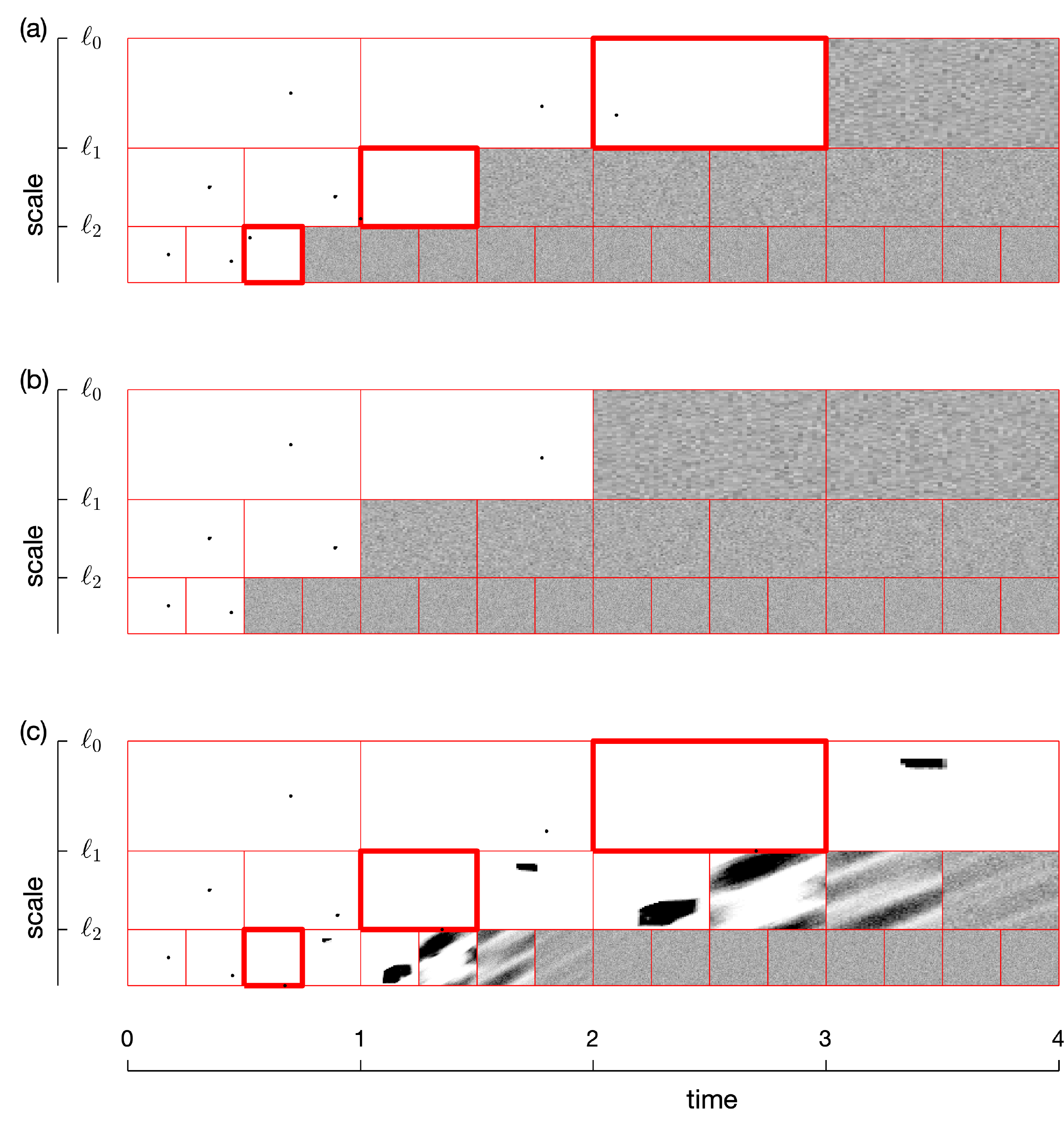}
\caption{Each rectangle shows the probability density functions (darker colors correspond to higher probabilities) for the variable $u_n(t) \in \mathbb{T}^2$ corresponding to the upper left corner of the rectangle. Only the scales $n = 0,1,2$ are demonstrated. The results are obtained by simulating numerically $10^8$ samples of the system with the initial conditions $u_n(0) = (0.7,0.5)$ for $n = 0,\ldots,N$ and the random variable $\xi$ uniformly distributed in the interval $[0,\varepsilon]$: (a) $N = 7$ and $\varepsilon = 10^{-10}$, and (b) $N = 9$ and $\varepsilon = 10^{-1}$. Bold red borders designate variables in the transitional region, $t \sim 2\tau_n$, where the convergence is only exponential in $N$ and is not attained for very small $\varepsilon$. The last panel (c) shows analogous results for $N = 9$ and $\varepsilon = 10^{-10}$ in modified system (\ref{eq4new}). }
\label{fig3}
\end{figure}

We now verify how fast $P_{n,t}^{(N)}(\alpha)$ grows with $N$. For this purpose, we compute the longest path (dominant term) in expression (\ref{eq19}).
This path contains the maximum number of arrows at the smallest scale $\ell_N$, supplemented with $N-n$ diagonal arrows (one at every scale) in order to reach the node $(n,t)$; see Fig.~\ref{figA}(c). The number of arrows at scale $N$ is evaluated as $(t-\Delta t)/\tau_N$, where $\tau_N = 2^{-N}$ is the turn-over time for each arrow and $\Delta t$ is the time interval occupied by the diagonal arrows at larger scales. This interval is evaluated as
	\begin{equation}
	\label{eqS1}
	\Delta t = \sum_{j = n}^{N-1}\tau_j 
	= \sum_{j = n}^{N-1} 2^{-j} 
	= 2^{1-n}-2^{1-N} = 2\tau_n-2\tau_N.
	\end{equation}
Therefore, the total number of arrows in the path is found as
	\begin{equation}
	\label{eqS2}
	|p| = \frac{t-\Delta t}{\tau_N}+N-n
	= \frac{t-2\tau_n}{\tau_N}+2+N-n
	= 2^N(t-2\tau_n)+N-n+2.
	\end{equation}

Using the longest path (\ref{eqS2}) in expression (\ref{eq19}), yields the lower-bound estimate as
	\begin{equation}
	\label{eq32}
	P_{n,t}^{(N)}(\alpha) \ge \alpha^b,\quad 
	b = 2^N(t-2\tau_n)+N-n+2.
	\end{equation}
This expression suggests that the time $t = 2\tau_n$ is transitional: the factor $P_{n,t}^{(N)}(\alpha) \propto \alpha^N$ grows exponentially in $N$ at $t \sim 2\tau_n$, while the growth becomes double-exponential with $P_{n,t}^{(N)}(\alpha) \propto \left(\alpha^{t-2\tau_n}\right)^{2^N}$ at larger times.

To be more specific, we computed the values of $P_{n,t}^{(N)}(\alpha)$ numerically using formula (\ref{eq19}) and presented the results graphically in Fig.~\ref{fig2}. One observes that, in the model with only $N = 7$ scales and utterly small noise of amplitude $\varepsilon \sim 10^{-50}$, the spontaneously stochastic behaviour develops for all variables lying to the right of the transitional (red/yellow) region. Therefore, systems with a moderate number of scales $N$ must demonstrate the spontaneously stochastic behaviour even for extremely small random perturbations. However, larger perturbations are required for convergence in the transitional region. This result is tested numerically in Figs.~\ref{fig3}(a,b).

\section{Discussion}\label{sec7}
We designed a simple model that demonstrates the Eulerian spontaneous stochasticity (ESS): It is a formally deterministic scale-invariant system with deterministic initial conditions, which has uncountably many non-unique solutions and yields a universal stochastic process when regularized with a small-scale infinitesimal random perturbation. Our work provides the rigorous study of this system proving the existence of spontaneously stochastic solution as well as its universality (independence of the vanishing regularization term). The exceptional and counterintuitive property of this solution is that it assigns equal probability (uniform probability density) to all non-unique solutions. At integer times, the solution represents a Markov process converging to the equilibrium (uniform) state. 

Our results can be extended to other forms of random regularization, e.g., random variables depending on $N$ or random perturbations added to all variables (noise). Also, one can use this idea for designing spontaneously stochastic systems with different behaviors by modifying the couplings or imposing extra conditions like conserved quantities. For example, Fig.~\ref{fig3}(c) shows the numerical results when Eq.~(\ref{eq4}) is replaced by
	\begin{equation}
	\label{eq4new}
	u_n(t+\tau_n) = Au_n(t)+0.1\left(\cos x_{n+1}(t),\cos y_{n+1}(t)\right),
	\end{equation}
where $u_{n+1}(t) = \left(x_{n+1}(t),y_{n+1}(t)\right) \in \mathbb{T}^2$. We see that model (\ref{eq4new}) yields a more sophisticated spontaneously stochastic solution. The rigorous study of such systems is challenging, leaving important theoretical questions for future study: how to analyse the existence, universality and robustness of spontaneously stochastic solutions in more general multi-scale models?

Our model can used as a prototype for a (first) experimental observation of the ESS implemented in a physical system, e.g., an optical or electric circuit. In this experiment, arrows in Fig.~\ref{fig1} represent waveguides, and coupling nodes are identical signal-processing gates. The scaling symmetry is maintained by choosing lengths of connecting waveguides proportional to turn-over times $\tau_n$, exploiting the property that a distance travelled by a signal is proportional to time.
The variables $u_n(t)$ can describe phases of propagating signals measured at each node, while the initial conditions are associated with the input signal. A challenge of this setup is in reproducing the coupling relation (\ref{eq4}) or a similar one that leads to the spontaneous stochasticity. Notice that the intrinsic hyperbolicity of Arnold's cat map can also be recreated in a simple mechanical system~\cite{hunt2003anosov,kuznetsov2005example}. The extremely fast convergence, which is double-exponential in a number of scales, suggests that the spontaneous stochasticity in the described experiment will be triggered by a natural microscopic noise  from the environment already in systems of moderate size, e.g. $N = 7$ from Fig.~\ref{fig2}.

Finally, the proposed model suggests that applications and occurrence of the ESS can be seen in a broader sense. This refers to multi-scale systems defined by deterministic rules but generating complex and genuinely stochastic processes. In real-world systems, this stochasticity may be triggered by a natural microscopic noise. Confirming ESS experimentally would imply that the occurrence of ESS should be studied in a wide range of applications, e.g., hydrodynamic turbulence, random-number generation, neural networks in artificial intelligence or living organisms, etc.
 
\vspace{2mm}\noindent\textbf{Acknowledgments.} 
The work was supported by CNPq (grants 303047/2018-6, 406431/2018-3).

\bibliographystyle{plain}
\bibliography{refs}

\end{document}